\renewcommand{\paragraph}[1]{\noindent {\bf #1}}
\pgfplotsset{width=10cm,compat=1.9}
\definecolor{codegreen}{rgb}{0,0.6,0}
\definecolor{codegray}{rgb}{0.5,0.5,0.5}
\definecolor{codepurple}{rgb}{0.58,0,0.82}
\definecolor{backcolour}{rgb}{0.98,0.98,0.96}
\definecolor{pblue}{rgb}{0.13,0.13,1}
\definecolor{pgreen}{rgb}{0,0.5,0}
\definecolor{pred}{rgb}{0.9,0,0}
\definecolor{pgrey}{rgb}{0.46,0.45,0.48}
\definecolor{porange}{rgb}{1.0, 0.55, 0.0}
\definecolor{ppurple}{rgb}{0.41, 0.16, 0.38}
\definecolor{plightblue}{rgb}{0.0,0.35,0.6}
\definecolor{plightblue2}{rgb}{0.51,0.93,1.0}
\definecolor{colorrevision}{rgb}{0.0,0.35,0.6}
\newcommand{\longversion}[2]{\ifx\ifm\undefined#1\else#2\fi}
\tiny\color{codegray},
\newcommand*{\rom}[1]{\expandafter\@slowromancap\romannumeral #1@}
\newcommand{\secref}[1]{\S\,\ref{#1}}
\newcommand{\defref}[1]{Def.~\ref{#1}}
\newcommand{\figref}[1]{Fig.~\ref{#1}}
\newcommand{\thmref}[1]{Thm.~\ref{#1}}
\newcommand{\exref}[1]{Ex.~\ref{#1}}
\newcommand{\appref}[1]{App.~\ref{#1}}
\newcommand{\algref}[1]{Alg.~\ref{#1}}
\newcommand{\lstref}[1]{Listing~\ref{#1}}
\newcommand{\lta}{\textsf{BFA}\xspace}
\newcommand{\lfa}{\textsf{BFA}\xspace}
\newcommand{\lfas}{\textsf{BFA}s\xspace}
\newcommand{\wtd}[1]{\widetilde{#1}}
\newcommand{\len}[1]{|#1|}
\newcommand{\map}[1]{\ensuremath{{\llbracket}#1{\rrbracket}}}
\newcommand{\dtransfer}{\mathsf{dtransfer}}
\newcommand*{\defeq}{\stackrel{\text{def}}{=}}
\newcommand{\accesspath}{\ensuremath{\mathcal{AP}}}
\newcommand{\revision}[1]{#1} 
\newcommand{\revisiontodo}[1]{}
\begin{document}

\title{Scalable Typestate Analysis for Low-Latency Environments}

\author{Alen Arslanagi\'{c}\inst{1} \and 
Pavle Suboti\'{c}\inst{2} \and Jorge A. P\'{e}rez\inst{1}}

\institute{University of Groningen, The Netherlands \and Microsoft, Serbia 
}

\maketitle 

%%
%% The abstract is a short summary of the work to be presented in the
%% article.
\begin{abstract}
Static analyses based on \emph{typestates} are important in certifying
correctness of code contracts. Such analyses rely on
Deterministic Finite Automata (DFAs) to specify properties of an object. 
We target the analysis of contracts in low-latency environments, where many useful contracts are impractical to codify as DFAs and/or the size of their associated DFAs leads
to sub-par performance. 
To address this bottleneck, we present a \emph{lightweight} typestate analyzer,
based on an expressive specification language that can succinctly specify code contracts. 
By implementing it in the %industrial-strength
static analyzer \textsc{Infer}, we demonstrate considerable performance and usability benefits when compared to existing
techniques. 
A central insight is to rely on a sub-class of DFAs with efficient \emph{bit-vector} operations.
%\keywords{Static analysis, Code contracts, Typestate}
\end{abstract}

\section{Introduction}
Industrial-scale software is generally composed of multiple interacting components, which are typically  
produced separately. As a result, software integration is a major source of 
bugs~\cite{integrationbugs}. Many integration bugs can be attributed to 
violations of \emph{code contracts}. Because these contracts are implicit and informal in nature,  
the resulting bugs are particularly insidious. To address this problem, formal 
code contracts are an effective solution~\cite{staticcontract},  
because static analyzers can automatically check whether client code adheres to ascribed contracts.

\emph{Typestate} is a fundamental concept in ensuring the correct use of
contracts and APIs. A typestate refines the concept of a type: whereas a type
denotes the valid operations on an object, a typestate denotes operations valid
on an object in its \emph{current program
context}~\cite{DBLP:journals/tse/StromY86}. Typestate analysis is a technique
used to enforce temporal code contracts. In object-oriented programs, where
objects change state over time, typestates denote the valid sequences of method
calls for a given object. The behavior of the object is prescribed by the
{collection of typestates}, and each method call can potentially change the
object's typestate. 

Given this, it is natural for static typestate checkers, such as
\textsc{Fugue}~\cite{DeLineECOOP2004}, \textsc{SAFE}~\cite{SAFE}, and
\textsc{Infer}'s \textsc{Topl} checker~\cite{topl}, to define the analysis property using
Deterministic Finite Automata (DFAs). The abstract domain of the analysis is a
set of states in the DFA; each operation on the object modifies the set of
possible reachable states. If the set of abstract states contains an error
state, then the analyzer  warns the user that a code contract may be violated.
Widely applicable and conceptually simple, DFAs are the de facto model in
typestate  analyses. 

Here we target the analysis of realistic code contracts in low-latency
environments such as, e.g., Integrated Development Environments
(IDEs)~\cite{inca,nblyzer}. In this context, to avoid noticeable disruptions in
the users' workflow, the analysis should ideally run \emph{under a
second}~\cite{rail}. However, relying on DFAs jeopardizes this goal, as it can
lead to scalability issues. Consider, e.g., a class with $n$ methods in which
each method \emph{enables} another one and then \emph{disables} itself: the
contract can lead to a DFA  with $2^{n}$ states. Even with a small $n$, such a
contract can be impractical to codify manually and will likely result in sub-par
performance.

Interestingly, many practical contracts do not require a full DFA. In our
enable/disable example, the method dependencies are \emph{local} to a subset
of methods: a enabling/disabling relation is established between pairs
of methods. DFA-based approaches have a \emph{whole class} expressivity; as a result, local method dependencies can impact transitions of unrelated methods. Thus, using DFAs for contracts that specify dependencies that are {local} to each
method (or to a few methods) is redundant and/or prone to inefficient
implementations. Based on this observation,  we present a \emph{lightweight} typestate analyzer for \emph{locally dependent} code contracts in low-latency 
environments. It rests upon two insights:
\begin{enumerate}
\item \emph{Allowed and disallowed sequences of method calls for objects can be succinctly specified without using DFAs.} 
To unburden the task of specifying typestates, we introduce  {\textit{lightweight annotations}} to specify
\emph{method dependencies} as annotations on methods. Lightweight annotations can specify 
code contracts for usage scenarios commonly encountered when using libraries such as File,
Stream, Socket, etc. in considerably fewer lines of code than DFAs. 
\item  \emph{A sub-class of DFAs suffices to express many useful code contracts}. 
To give semantics to lightweight annotations, we define \emph{Bit-Vector Finite Automata~(\lfas)}: a sub-class of DFAs whose analysis uses \emph{bit-vector} operations.
In many practical scenarios, \lfas suffice to capture information about the enabled and disabled methods at a given point. Because this information can be codified using bit-vectors, associated static analyses can be performed efficiently; in particular, our technique is not sensitive to the number of \lfa states, which in turn ensures scalability with contract and program size.
\end{enumerate}

\noindent
We have implemented our lightweight typestate analysis in the industrial-strength
static analyzer \textsc{Infer}~\cite{infer}. Our analysis exhibits concrete
usability and performance advantages and is expressive enough to encode
many relevant typestate properties in the literature. On
average, compared to state-of-the-art typestate analyses, our approach requires
less annotations than DFA-based analyzers and does not exhibit slow-downs due to
state increase. We summarise our contributions as follows:
% \subparagraqaph{Contributions and Organization} We summarise our contributions as follows:
\begin{itemize} 
	\item A specification language for typestates based on \emph{lightweight annotations} (\S\ref{sec:technical}). Our language rests upon \lfas, a new sub-class of DFA based on bit-vectors.
	\item A lightweight analysis technique for code contracts, implemented in \textsc{Infer} 
	(our artifact is available at \cite{lfachecker}).\footnote{Our code is available at
 \url{https://github.com/aalen9/lfa.git}}
        \item Extensive evaluations for our lightweight analysis technique,
        which demonstrate considerable gains in  performance and usability
        (\S\ref{sec:evaluation}). 
\end{itemize}
\noindent

\section{Bit-vector Typestate Analysis}
\label{sec:technical}
\subsection{Annotation Language}
\label{sec:annotations}
We introduce \lta specifications, which succinctly encode temporal properties by
only describing \textit{local method dependencies}, thus avoiding an explicit
DFA specification. \lfa specifications define code contracts by using atomic
combinations of annotations `$\texttt{@Enable}(n)$' and
`$\texttt{@Disable}(n)$', where $n$ is a set of method names. Intuitively,
`$\texttt{@Enable}(n) \ m$' asserts that invoking method $m$ makes calling
methods in $n$ valid in a continuation. Dually, `$\texttt{@Disable}(n) \ m$'
asserts that a call to $m$ disables calls to all methods in $n$ in the
continuation. More concretely, we give semantics for \lta annotations by
defining valid method sequences:

\begin{definition}[Annotation Language]
    Let $C = \{m_0, \ldots, m_n\}$ be a set of method names where each $m_i \in C$ 
    is annotated by  
    \begin{align*}
    	&\texttt{@Enable}(E_i) \ \texttt{@Disable}(D_i) \ m_i 
    \end{align*}
    \noindent where $E_i \subseteq C$, 
    $D_i \subseteq C$, and $E_i \cap D_i = \emptyset$. 
    Further, we have $E_0 \cup D_0 = C$. Let $s = x_0, x_1, x_2, \ldots$
    be a method sequence where each $x_i \in C$. 
    A sequence $s$ is \emph{valid (w.r.t. annotations)} if
    there is no substring $s'=x_i, \ldots,x_{k}$ of $s$ 
    such that $x_{k} \in D_i$ and $x_k \not\in E_j$, for 
     $j \in \{i+1, \ldots, k\}$.  
\end{definition}
The formal semantics for these specifications is given in \secref{sec:lfa}. We
note, if $E_i$ or $D_i$ is $\emptyset$ then we omit the corresponding
annotation.  Moreover, the \lfa language can be used to derive other useful
annotations defined as follows: 
\begin{align*}
    \texttt{@EnableOnly}(E_i) \ m_i  &\defeq \texttt{@Enable}(E_i) \ 
    \texttt{@Disable}(C \setminus E_i) \ m_i \\
    \texttt{@DisableOnly}(D_i) \ m_i &\defeq \texttt{@Disable}(D_i) \ 
    \texttt{@Enable}(C \setminus E_i) \ m_i \\
    \texttt{@EnableAll} \ m_i &\defeq \texttt{@Enable}(C) \ m_i
\end{align*} 
\noindent This way, `$\texttt{@EnableOnly}(E_i) \ m_i$' asserts that a call to method
$m_i$ enables only calls to methods in $E_i$ while disabling all other methods
in $C$; `$\texttt{@DisableOnly}(D_i) \ m_i$' is defined dually. 
Finally, `$\texttt{@EnableAll} \ m_i$' asserts that a call to method $m_i$  
enables all methods in a class;  `$\texttt{@DisableAll} \ m_i$' can be defined 
dually. 

To illustrate the expressivity and usability of \lta annotations, we consider the \texttt{SparseLU} class from \texttt{Eigen C++} 
library\footnote{\url{https://eigen.tuxfamily.org/dox/classEigen_1_1SparseLU.html}}.  
For brevity, we consider representative methods for a typestate specification
(we also omit return types): 
\lstset{
% basicstyle=\footnotesize \ttfamily,
basicstyle=\scriptsize \ttfamily, 
}
\begin{lstlisting}
class SparseLU {
    void analyzePattern(Mat a); 
    void factorize(Mat a); 
    void compute(Mat a); 
    void solve(Mat b);  }
\end{lstlisting}

\noindent
The \texttt{SparseLU} class implements a lower-upper (LU) decomposition of a
sparse matrix. \texttt{Eigen}'s implementation uses assertions to dynamically
check that: (i) \texttt{analyzePattern} is called prior to \texttt{factorize}
and (ii) \texttt{factorize} or \texttt{compute} are called prior to
\texttt{solve}. At a high-level, this contract tells us that \texttt{compute}
(or \texttt{analyzePattern().factorize()}) prepares resources for invoking
\texttt{solve}.

We notice that there are method call sequences that do not cause errors, but
have redundant computations. For example,  we can disallow consecutive calls to
\texttt{compute} as in, e.g., sequences like
`\texttt{compute().compute().solve()}' as the result of the first compute is
never used. Further, \texttt{compute} is essentially implemented as
`\texttt{analyzePattern}().\texttt{factorize}()'. 
Thus, it is also redundant to call \texttt{factorize} after \texttt{compute}.
The  DFA that substitutes dynamic checks and avoids redundancies is given in
Figure~\ref{fig:sparselu-dfa-refined}. Following the
literature~\cite{DeLineECOOP2004}, this DFA can be annotated inside a class
definition as in \lstref{lst:refdfa}. Here states are listed in the class header
and transitions are specified by \textit{@Pre} and \textit{@Post} conditions on
methods. However, this specification is too low-level and unreasonable for
software engineers to annotate their APIs with, due to high annotation
overheads.

%%% CODE SNIPPET HERE 
\lstset{
% basicstyle=\footnotesize \ttfamily,
basicstyle=\scriptsize \ttfamily, 
}
\begin{figure} [t]
    % \vspace*{-0.5cm}
    \begin{subfigure}[b]{0.48\textwidth}
\begin{lstlisting}[caption={SparseLU DFA Contract}, label={lst:refdfa}, captionpos=b]
class SparseLU {
    $states q0, q1, q2, q3;$ 
    $@Pre(q0) @Post(q1)$
    $@Pre(q3) @Post(q1)$ 
    void analyzePattern(Mat a); 
    $@Pre(q1) @Post(q2)$
    $@Pre(q3) @Post(q2)$  
    void factorize(Mat a); 
    $@Pre(q0) @Post(q2)$ 
    $@Pre(q3) @Post(q2)$ 
    void compute(Mat a); 
    $@Pre(q2) @Post(q3)$
    $@Pre(q3)$ 
    void solve(Mat b);  }
\end{lstlisting}
\end{subfigure} 
\begin{subfigure}[b]{0.48\textwidth}
    % \hfill
\begin{lstlisting}[caption={SparseLU \lfa Contract}, label={lst:reflta}, captionpos=b, 
    numbers=none]
class SparseLU {
    
    
    $@EnableOnly(factorize)$
    void analyzePattern(Mat a); 
    
    $@EnableOnly(solve)$
    void factorize(Mat a); 
    
    $@EnableOnly(solve)$
    void compute(Mat a); 

    $@EnableAll$
    void solve(Mat b);  }
\end{lstlisting}
% \hfill
\vspace*{\fill}
\end{subfigure}
% \vspace*{-0.7cm}
\end{figure} 
%%% CODE SNIPPET HERE 

\begin{figure}[t!] 
    \begin{mdframed}
        \center
    \begin{tikzpicture}[shorten >=0.5pt,node distance=1.60cm,on grid,auto] 
       \node[state,initial] (q_0)   {$q_0$}; 
       \node[state](q_1) [right=2 cm of q_0] {$q_1$};
       \node[state](q_2) [below=1.1 cm of q_1] {$q_2$};
       \node[state](q_3) [right=2 cm of q_2] {$q_3$};
       \path[->]
       (q_0) edge node {$aP$} (q_1)
       (q_1) edge  [left] node {$\mathit{factorize}$} (q_2)
    %    (q_1) edge [loop above] node {$aP$} (q_1)
    %    (q_1) edge [bend left]  node {$factorize$} (q_2)
       (q_0) edge [bend right, sloped, below] node {\footnotesize \textit{compute}} (q_2)
    %    (q_2) edge [loop below] node{$aP,factorize,compute, solve$} (q_2) 
       (q_2) edge [bend left, below] node {\footnotesize \textit{solve}} (q_3)
       (q_3) edge [bend left] node {$\mathit{compute}, \mathit{factorize}$} (q_2)
       (q_3) edge [bend right, right] node {$\mathit{aP}$} (q_1)
       (q_3) edge [loop right] node{$\mathit{solve}$} (q_3) 
       ;
\end{tikzpicture}
\end{mdframed}
\vspace*{-0.45cm}
\caption{SparseLU DFA}
\label{fig:sparselu-dfa-refined}
% \vspace*{-0.3cm}
\end{figure}

In contrast, using \lfa annotations the entire \texttt{SparseLU} class contract
can be succinctly specified as  
in \lstref{lst:reflta}. Here, the starting state is unspecified; it is
determined by annotations. In fact,  methods that are not \textit{guarded} by
other methods (like \texttt{solve} is guarded by \texttt{compute}) are enabled
in the starting state. 
We remark that this can be overloaded by specifying
annotations on the constructor method. We can specify the contract with only 4
annotations; the corresponding DFA requires 8 annotations and 4 states specified
in the class header. We remark that a small change in local method dependencies by \lfa annotations
can result in a substantial change of the equivalent DFA. Let
$\{m_1,m_2,m_3,\ldots,m_n\}$ be methods of some class with DFA  associated (with
states $Q$) in which $m_1$ and $m_2$ are enabled in each state of $Q$. Adding
\texttt{@Enable(m2) m1} doubles the number of states of the DFA as we need the
set of states $Q$ where $m_2$ is enabled in each state, but also states from $Q$
with $m_2$ disabled in each state. Accordingly, transitions have to be
duplicated for the new states and the remaining methods ($m_3,\ldots,m_n$).

\subsection{Bit-vector Finite Automata}
\label{sec:lfa}

We define a class of DFAs, dubbed Bit-vector Finite Automata (\lfa), that
captures enabling/disabling dependencies between the methods of a class
leveraging a bit-vector abstraction on typestates.

\begin{definition}[Sets and Bit-vectors]
\label{d:bv}
Let $\mathcal{B}^n$ denote the set of bit-vec\-tors of length $n >0$. We write
$b, b', \ldots$ to denote elements of $\mathcal{B}^n$, with $b[i]$ denoting the
$i$-th bit in $b$. Given a finite set $S$ with $|S|=n$, every $A \subseteq S$
can be represented by a bit-vector $b_A \in \mathcal{B}^n$, obtained via the
usual characteristic function. By a small abuse of notation, given sets $A, A'
\subseteq S$, we may write $A \subseteq A'$ to denote the subset operation
applied on   $b_{A}$ and $b_{A'}$ (and similarly for $\cup,\cap$).
\end{definition}

\noindent
We first define a \lfa per class. Let us write $\mathcal{C}$ to denote the
finite set of all classes $c, c', \ldots$ under consideration. Given a $c \in
\mathcal{C}$ with $n$ methods, and assuming a total order on method names, we
represent them by the set $\Sigma_c = \{ m_1,\ldots,m_n \}$. 

A \lfa for a class with $n$ methods considers states $q_b$, where, following
Def.~\ref{d:bv}, the bit-vector  $b_A \in \mathcal{B}^n$ denotes the set $A \subseteq
\Sigma_c$ enabled at that point. We often write `$b$' (and $q_b$) rather than
`$b_A$' (and `$q_{b_A}$'), for simplicity. As we will see, the intent is that if
$m_i \in b$ (resp. $m_i \not\in b$), then the $i$-th method is enabled (resp.
disabled) in  $q_b$. Def.~\ref{d:map} will give a mapping from methods to
triples of bit-vectors. 
Given $k > 0$, let us write $1^k$ (resp. $0^k$) to denote a sequence of 1s
(resp. 0s) of length $k$. The initial state of the \lfa is then $q_{10^{n-1}}$, i.e., the
state in which only the first method is enabled and all the other $n-1$ methods
are disabled.

Given a class $c$, we define its associated mapping $\mathcal{L}_c$ as follows: 

\begin{definition}[Mapping $\mathcal{L}_c$]
\label{d:map}
Given a class $c$, we define $\mathcal{L}_c$ as a mapping from 
    methods to triples of subsets of $\Sigma_c$ as follows 
    $$\mathcal{L}_c : 
    \Sigma_c \to \mathcal{P}(\Sigma_c) \times 
    \mathcal{P}(\Sigma_c) \times  \mathcal{P}(\Sigma_c)$$
    % $$\mathcal{L} = \{ m_i \mapsto (e,\ d,\ pre) \}_{m_i \in \Sigma}$$ 
    \end{definition}
\noindent Given $m_i \in \Sigma_c$, we shall write $E_i$, $D_i$ and $P_i$ to denote each
of the elements of the triple $\mathcal{L}_c(m_i)$.  
The mapping $\mathcal{L}_c$ is induced by the annotations in class $c$:
for each   $m_i$, the sets  $E_i$ and $D_i$ are explicit, and $P_i$ is
simply the singleton $\{m_i\}$. 

In an \lfa, transitions between states $q_{b}, q_{b'}, \cdots$ are determined by
$\mathcal{L}_c$. Given $m_i \in \Sigma_c$, we have $j \in E_i$ if and only if the $m_i$
enables $m_j$; similarly, $k \in D_i$ if and only if $m_i$ disables $m_k$.
% meaning that for a method is that this method is valid in a state. 
A transition from $q_b$ labeled by method $m_i$ leads to state $q_{b'}$, where
$b'$ is determined by $\mathcal{L}_c$ using $b$. Such a transition is defined
only if a pre-condition for $m_i$ is met in state $q_b$, i.e., 
 $P \subseteq b$. In that case, $b' = (b \cup E_i) \setminus D_i$.

These intuitions should suffice to illustrate our approach and, in particular,
the local nature of enabling and disabling dependencies between methods. The
following definition makes them precise.

\begin{definition}[\lfa]
% \label{d:lfa}
    Given a  $c \in \mathcal{C}$ with $n > 0$ methods, a Bit-vector Finite
    Automaton (\lfa) for $c$ is defined as a tuple $M = (Q, \Sigma_c, \delta,
    q_{10^{n-1}},\mathcal{L}_c)$ where:
    % \vspace*{-0.2cm}
    \begin{itemize}
        \item $Q $ is a finite set of states $q_{10^{n-1}}, q_b, q_{b'}, \ldots$, where $b,
        b', \ldots \in \mathcal{B}^n$;
        \item $q_{10^{n-1}}$ is the initial state;
        \item $\Sigma_c =  \{m_1, \ldots, m_n\}$ is the alphabet (method
            identities);
        \item $\mathcal{L}_c$ is a \lta mapping (cf. Def.~\ref{d:map});

        \item $\delta: Q \times \Sigma_c \to Q$ is the transition function,
        where $\delta(q_b, m_i) = q_{b'}$ (with $b' = (b \cup E_i) \setminus D_i$)
        if $P_i \subseteq b$, and is
        undefined otherwise.
    \end{itemize}
    % \vspace*{-0.2cm}
    \label{d:lfa}
    \noindent\revision{We remark that in a \lfa  all states in $Q$ are accepting states.}
    \end{definition} 

    \begin{example}[SparseLU]
    We give the \lfa derived from the annotations in the SparseLU
    example (\lstref{lst:reflta}). We associate indices to methods:
    % as follows: 
    % by taking its position in the alphabetically sorted list of method names: 
    $$[0: \mathit{constructor},
    1:\mathit{aP},2:\mathit{compute},3:\mathit{factorize},4:\mathit{solve}]$$
    The constructor annotations are implicit: it
    enables methods that are not guarded by annotations on other methods (in
    this case, \textit{aP} and \textit{compute}). The mapping
    $\mathcal{L}_{\text{SparseLU}}$ is as follows: 
    \begin{align*}
        & \mathcal{L}_{\text{SparseLU}}  = \{ 0 \mapsto (\{1,2\},\{\},\{0\}),\ 
        1 \mapsto (\{3\},\{1,2,4\},\{1\}),  \\
        & \qquad 2 \mapsto (\{4\},\{1,2,3\},\{2\}),\  
        3 \mapsto (\{4\},\{1,2,3\},\{3\}),  4 \mapsto (\{1,2,3\},\{\},\{4\})\}
    \end{align*}
    The set of states is $Q = \{q_{1000},q_{1100}, q_{0010}, q_{0001},
    q_{1111}\}$ and the  transition function $\delta$  is given by following
    nine transitions: 
   
    \begin{center} 
    \begin{tabular}{lll}
        $\delta(q_{1000}, constr) = q_{1100}$ & 
        $\delta(q_{1100}, \mathit{aP}) = q_{0010}$ & 
        $\delta(q_{1100}, \mathit{compute}) = q_{0010}$
        \\ 
        $\delta(q_{0010}, \mathit{factorize}) = q_{0001}$  & 
        $\delta(q_{0001}, \mathit{solve}) = q_{1111}$  &
        $\delta(q_{1111}, \mathit{aP}) = q_{0010}$ 
        \\ 
        $\delta(q_{1111}, \mathit{compute}) = q_{0001}$ & 
        $\delta(q_{1111}, \mathit{factorize}) = q_{0001}$ &
        $\delta(q_{1111}, \mathit{solve}) = q_{1111}$
    \end{tabular} 
\end{center}
\end{example}

% \subsubsection{\lfa restrictions}
\paragraph{\lfas vs DFAs}
\revision{First, we need define some convenient notations:
\begin{definition}[Method sequences and concatenation]
    We use $\wtd m$ to denote a finite sequence of 
    method names in $\Sigma$. Further, 
    we use `$\cdot$' to denote sequence concatenation, defined as expected.
 
\end{definition}}
\revision{In the following theorem, we use $\hat \delta(q_b,\wtd m)$ to 
denote the extension of the one-step transition function $\delta(q_b, m_i)$ to a sequence of method calls 
({i.e.}, $\wtd m$).}
\lfas determine a strict sub-class of DFAs. 
\revision{First, because all states in $Q$ are accepting states, \lfa cannot encode the \emph{``must call''} property (cf. \secref{sec:relatedwork}).}
\revision{Next, we define the \emph{context-independency} property,
 satisfied by all \lfas but not by all DFAs:}
%  We formalize this as a \emph{context-independency} property,
%  satisfied by all \lfas but not by all DFAs: 
 \\
\revisiontodo{Define $\hat{\delta}$? Emphasize that all states in $Q$ are
accepting states, this is important for how we 
define $L$ in the theorem below. }
    \begin{restatable}[Context-independency]{thm}{thmbfa}
        \label{t:bfa}
            Let $M = (Q, \Sigma_c, \delta,
    q_{10^{n-1}},\mathcal{L}_c)$ be a \lfa. Also, 
    let 
    $L = \{ \wtd m : \hat{\delta}(q_{10^{n-1}}, \wtd m) = q' \wedge 
    q' \in Q \}$ be the language accepted by $M$. Then, for $m_n \in \Sigma_c$ we have  
    \begin{enumerate} 
        \item If there is $\wtd p \in L$ and $m_{n+1} \in \Sigma_c$ 
        s.t. $\wtd p \cdot m_{n+1} \notin L$ and 
        $\wtd p \cdot m_n \cdot m_{n+1} \in L$ then 
        there is no $\tilde m \in L$ s.t. 
        $\wtd m \cdot m_n \cdot m_{n+1} \notin L$. 

        \item If there is $\wtd p \in L$ and $m_{n+1} \in \Sigma_c$ 
        s.t. $\wtd p \cdot m_{n+1} \in L$ and 
        $\wtd p \cdot m_n \cdot m_{n+1} \notin L$ then 
        there is no $\widetilde m \in L$ s.t. 
        $\wtd m \cdot m_n \cdot m_{n+1} \in L$. 
    \end{enumerate} 
\end{restatable}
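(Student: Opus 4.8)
The plan is to reduce membership in $L$ to definedness of the run of $M$, and then to reason about the single method $m_{n+1}$. Since every state of $M$ is accepting, $\wtd m \in L$ holds exactly when $\hat\delta(q_{10^{n-1}}, \wtd m)$ is defined, i.e. when every prefix meets its precondition; and in a reachable state $q_b$ a method $m_j$ is enabled precisely when $m_j \in b$. The one computational fact I would isolate first concerns the effect of an $m_n$-transition on $m_{n+1}$: if $m_n \in b$ and $\delta(q_b, m_n) = q_{b'}$ with $b' = (b \cup E_n) \setminus D_n$, then, using $E_n \cap D_n = \emptyset$, we have $m_{n+1} \in b'$ whenever $m_{n+1} \in E_n$; $m_{n+1} \notin b'$ whenever $m_{n+1} \in D_n$; and $m_{n+1} \in b'$ iff $m_{n+1} \in b$ when $m_{n+1} \notin E_n \cup D_n$. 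These three cases drive both parts.

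For item (1), put $q_b = \hat\delta(q_{10^{n-1}}, \wtd p)$. From $\wtd p \cdot m_n \cdot m_{n+1} \in L$ the $m_n$-transition from $q_b$ is defined (so $m_n \in b$), reaching some $q_{b'}$ with $m_{n+1} \in b'$; from $\wtd p \cdot m_{n+1} \notin L$ we get $m_{n+1} \notin b$. Thus $m_{n+1}$ passes from disabled in $q_b$ to enabled in $q_{b'}$, which by the three-case fact excludes both the $D_n$ case and the unchanged case, forcing $m_{n+1} \in E_n$. But $m_{n+1} \in E_n$ is a property of $\mathcal{L}_c$ alone, so the $m_n$-successor of \emph{any} reachable state in which $m_n$ is enabled has $m_{n+1}$ enabled. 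Hence every $\wtd m \in L$ with $\wtd m \cdot m_n \in L$ satisfies $\wtd m \cdot m_n \cdot m_{n+1} \in L$, and no violating $\wtd m$ exists.

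Item (2) is the exact dual. From $\wtd p \cdot m_{n+1} \in L$ and $\wtd p \cdot m_n \cdot m_{n+1} \notin L$ the method $m_{n+1}$ passes from enabled before the $m_n$-transition to disabled after it, which by the three-case fact excludes the unchanged case and the $E_n$ case, forcing $m_{n+1} \in D_n$. Then $m_{n+1}$ is disabled immediately after every defined $m_n$-transition, so no $\wtd m \in L$ with $\wtd m \cdot m_n \in L$ can satisfy $\wtd m \cdot m_n \cdot m_{n+1} \in L$.

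I expect the main obstacle to lie not in the algebra but in excluding the unchanged case $m_{n+1} \notin E_n \cup D_n$: this is the only case in which the enabledness of $m_{n+1}$ after $m_n$ can depend on the prior context $b$, and the hypotheses are engineered precisely to exhibit a context in which that membership changes, which the unchanged case forbids. This is exactly the content of ``context-independency'': whether $m_n$ enables or disables $m_{n+1}$ is fixed once and for all by membership of $m_{n+1}$ in $E_n$ versus $D_n$. A secondary subtlety I would make explicit is that the conclusion should be read over $\wtd m \in L$ with $\wtd m \cdot m_n \in L$; if $m_n$ happened to be disabled after $\wtd m$ then $\wtd m \cdot m_n \cdot m_{n+1} \notin L$ for a degenerate reason unrelated to the $m_n/m_{n+1}$ dependency the theorem is about, so this restriction is what makes the statement capture precisely context-independency.
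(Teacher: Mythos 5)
Your proof is correct and follows essentially the same route as the paper's: both derive $m_{n+1} \in E_n$ (resp.\ $m_{n+1} \in D_n$) from the hypotheses on $\wtd p$, and then propagate this fact---which depends only on $\mathcal{L}_c$, not on context---to the $m_n$-successor of every reachable state, your three-case analysis merely spelling out what the paper compresses into ``by Def.~\ref{d:lfa}''. Your closing observation that the conclusion must be read over $\wtd m$ with $\wtd m \cdot m_n \in L$ is well spotted: the paper's proof makes exactly this restriction, but silently, by writing ``let $q_{b'}$ s.t.\ $\delta(q_b, m_n)=q_{b'}$'' and thereby assuming the $m_n$-transition is defined, whereas a context in which $m_n$ itself is disabled would otherwise furnish a degenerate counterexample to the statement as literally written.
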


\begin{proof} 
    Directly by \defref{d:lfa}. 
    See~\longversion{\appref{app:tbfa}}{\cite{ASP22-full}} for details. 
\end{proof} 

 Informally, the above theorem tells that  previous calls ($\wtd m$) (\textit{i.e.},
 context) cannot impact the effect of a call to $m_n$ to subsequent calls
 ($m_{n+1}$). That is, Item 1. (resp. Item 2.) tells that method $m_n$ enables
 (resp. disables) the same set of methods in any context. For example, a DFA
 that disallows modifying a collection while iterating is not a \lfa (as in Fig.
 3 in~\cite{modulartypestate}). Let $it$ be a Java  Iterator with its usual
 methods for collection $c$.  For the illustration, we assume a single DFA
 relates the iterator and its collection methods. Then, the sequence
 `\texttt{it.hasNext;it.next;c.remove;it.hasNext}' should not be allowed, although
 `\texttt{c.remove;it.hasNext}' should be allowed. That is, \texttt{c.remove}
 disables \texttt{it.hasNext} \emph{only if} \texttt{it.hasNext} is previously
 called. Thus, the effect of calling \texttt{c.remove} depends on the calls that
 precedes it. 

\paragraph{\lfa subsumption}
Using \lfas, checking class subsumption boils down to usual set inclusion.
Suppose $M_1$ and $M_2$ are \lfas for classes $c_1$ and $c_2$, with $c_2$ being
the superclass of $c_1$. The class inheritance imposes an important question on
how we check that $c_1$ is a proper refinement of $c_2$. In other words, $c_1$
must subsume $c_2$: any valid sequence of calls to methods of $c_2$ must also be
valid for $c_1$. Using \lfas, we can verify this simply by checking annotations
method-wise.  We can check whether $M_2$ subsumes $M_1$  only by considering
their respective annotation mappings  $\mathcal{L}_{c_2}$ and
$\mathcal{L}_{c_1}$. 
% (where $\mathcal{L}_i$
% corresponds to $M_i$ for $i \in \{1,2\}$). 
Then, we have $M_2 \succeq M_1$
% \[M_1 \preceq M_2\]
% \[M_2 \succeq M_1\]
% \noindent 
iff 
for all $m_j \in \mathcal{L}_{c_1}$ we have 
$E_1 \subseteq E_2$, $D_1 \supseteq D_2$, and $P_1 \subseteq P_2$
where $\langle E_i, D_i, P_i \rangle = \mathcal{L}_{c_i}(m_j)$  
for $i \in \{1,2\}$. 

% The worst case complexity is $\mathcal{O}(n \times n)$ where 
% $n$ is a number of methods in the parent class, (i.e. $\len{\mathcal{L}_1}$). 

% \vspace*{-0.5cm}
\section{Compositional Analysis Algorithm}
\label{sec:algorithm}
Since \lfas can be ultimately encoded as bit-vectors, for 
the non-compositional case e.g., intra-procedural, standard data-flow 
analysis frameworks can be employed~\cite{DFA}. However, in the case of 
member objects methods being called, we present a compositional 
algorithm that is tailored for the \textsc{Infer} compositional static analysis 
framework. We motivate our compositional analysis technique with the example below.
\lstset{
% basicstyle=\footnotesize \ttfamily,
basicstyle=\scriptsize \ttfamily, 
}
% \scriptsize
\begin{example} 
	\label{ex:ex-alg}
	Let \texttt{Foo} be a class 
	that has member \texttt{lu} of class \texttt{SparseLU} (cf. \lstref{lst:ex1}). 
%%% CODE SNIPPET HERE 
\lstset{
% basicstyle=\footnotesize \ttfamily,
basicstyle=\scriptsize \ttfamily, 
}
% \vspace*{-0.5cm}
\begin{figure}[t]
    \begin{subfigure}[t]{0.48\textwidth}

\begin{lstlisting}[caption={Class Foo using SparseLU}, label={lst:ex1}, captionpos=b]
class Foo {
	SparseLU lu; Matrix a; 
	void setupLU1(Matrix b) {
		this.lu.compute(this.a); 
		if (?) this.lu.solve(b); }
	void setupLU2() {
		this.lu.analyzePattern(this.a); 
		this.lu.factorize(this.a); }
	void solve(Matrix b) {
		this.lu.solve(b); } }
\end{lstlisting}
\end{subfigure} 
\begin{subfigure}[t]{0.48\textwidth}
    % \hfil
\begin{lstlisting}[caption={Client code for Foo}, label={lst:ex1-client}, captionpos=b, 
	numbers=none]
	void wrongUseFoo() { 
		Foo foo; Matrix b; 
		foo.setupLU1(); 
		foo.setupLU2(); 
		foo.solve(b); 
	}
\end{lstlisting}
% \hfill
\vspace*{\fill}
\end{subfigure}
 \vspace*{-0.28cm}
\end{figure} 
For each method of \texttt{Foo} that invokes methods on \texttt{lu} 
we compute a \emph{symbolic summary} that denotes the effect of executing that method on typestates of \texttt{lu}. 
To check against client code, a summary gives us: (i)~a pre-condition (i.e., which
methods should be allowed before calling a procedure) and (ii)~the effect on
the \emph{typestate} of an argument when returning from the procedure. 
A simple instance of a client is \texttt{wrongUseFoo} in \lstref{lst:ex1-client}. 
% which invokes methods on an object 
% of class \texttt{Foo}.
% \begin{lstlisting}
% void wrongUseFoo() { 
% 	Foo foo; Matrix b; 
% 	foo.setupLU1(); 
% 	foo.setupLU2(); 
% 	foo.solve(b); 
% }
% \end{lstlisting}

The central idea of our analysis is to accumulate enabling and
disabling annotations. For this, the abstract domain maps 
object access paths to triplets from
the definition of $\mathcal{L}_{\text{SparseLU}}$. A \emph{transfer function}
interprets method calls in this abstract state. We illustrate the
transfer function, presenting how abstract state evolves as
comments in the following code listing. 
% We mark calls with $c_i$ in the
% comments; this way, e.g., $\mathsf{transfer}(c_1, s_1) = s_2$. 
% we can look at the class members as formal arguments to methods. 

\begin{lstlisting}
void setupLU1(Matrix b) {
	// s1 = this.lu -> ({}, {}, {}) 
	this.lu.compute(this.a); 
	// s2 = this.lu -> ({solve}, {aP, factorize, compute}, {compute})
	if (?) this.lu.solve(b); }
	// s3 = this.lu -> ({solve, aP, factorize, compute}, {}, {compute})
	// join s2 s3 = s4
	// s4 = sum1 = this.lu -> ({solve}, {aP, factorize, compute}, {compute})
\end{lstlisting}
\noindent At the procedure entry
(line 2) we initialize the abstract state as a triplet with empty sets ($s_1$).
Next, the abstract state is updated at the invocation of \texttt{compute}
(line 3): we copy the corresponding tuple from
$\mathcal{L}_{\text{SparseLU}}(compute)$ to obtain $s_2$ (line 4). 
Notice that \texttt{compute} is in the pre-condition set of $s_2$. 
% the updated abstract state, denoted $s_2$, is
% given in line 4. 
Further, given the invocation of \texttt{solve} within the
if-branch in line 5 we transfer $s_2$ to $s_3$ as follows: the
enabling set of $s_3$ is the union of the enabling set from
$\mathcal{L}_{\text{SparseLU}}(solve)$ and the enabling set of $s_2$ with the disabling
set from $\mathcal{L}_{\text{SparseLU}}(solve)$ removed (i.e., an empty set
here).  
Dually, the disabling set of $s_3$ is the union of the disabling set of
$\mathcal{L}_{\text{SparseLU}}(solve)$ and the disabling set of $s_1$ with the enabling
set of $\mathcal{L}_{\text{SparseLU}}(solve)$ removed. Here we do not have
to add \texttt{solve} to the pre-condition set, as it is in the enabling set of $s_2$. 
Finally, we join the abstract states of two branches at
line 7 (i.e., $s_2$ and $s_3$). Intuitively, join operates as follows:
(i)~a method is enabled only if it is enabled in both branches and not disabled
in any branch; (ii)~a method is disabled if it is disabled in either branch;
(iii)~a method called in either branch must be in the pre-condition
(cf. \defref{d:lta-join}). Accordingly, in line 8 we obtain the final state
$s_4$ which is also a summary for \texttt{SetupLU1}.

\noindent
Now, we illustrate checking client code \texttt{wrongUseFoo()} 
with computed summaries: 

\begin{lstlisting}
void wrongUseFoo() { 
	Foo foo; Matrix b; 
	// d1 = foo.lu -> ({aP, compute}, {solve, factorize}, {})
	foo.setupLU1(); // apply sum1 to d1 
	// d2 = foo.lu -> ({solve}, {aP, factorize, compute}, {}) 
	foo.setupLU2(); // apply sum2 = {this.lu -> ({solve}, {aP, factorize, compute}, {aP}) } 
	// warning! `analyzePattern' is in pre of sum2, but not enabled in d2
	foo.solve(b); }
\end{lstlisting}

Above, at line 2 the abstract state is initialized with annotations of
constructor \texttt{Foo}. At the invocation of \texttt{setupLU1()} (line 4) we
apply $sum_1$ in the same way as user-entered annotations are applied to
transfer $s_2$ to $s_3$ above. Next, at line 6 we can see that \texttt{aP} is in
the pre-condition set in the summary for \texttt{setupLU2()} ($sum_2$), 
computed similarly as $sum_1$, but not in the enabling set
of the current abstract state $d_2$. Thus, a warning is raised:
\texttt{foo.lu} set up by \texttt{foo.setupLU1()} is never used and overridden by
\texttt{foo.setupLU2()}.

\paragraph{Class Composition}
In the above example, the allowed orderings of 
method calls to an object of class \texttt{Foo}  are imposed by
the contracts of its object members (\texttt{SparseLU}) and
the implementation of its methods. 
% which calls methods of  \texttt{lu}. 
In practice, a class can have multiple members with their own
\lfa contracts. For instance, class \texttt{Bar} can use 
two solvers \texttt{SparseLU} and \texttt{SparseQR}: 
\begin{lstlisting}
class Bar { 
	SparseLU lu; SparseQR qr; /* ... */ }
\end{lstlisting}
\noindent where class \texttt{SparseQR} has its own \lfa contract. 
The implicit contract of \texttt{Bar} depends on 
contracts of both \texttt{lu} and \texttt{qr}. 
Moreover, a class as \texttt{Bar} can be a member of some other class. 
\revision{Thus, we refer to those classes as \emph{composed} and
to  classes that have declared contracts 
(as \texttt{SparseLU}) as {\emph{base classes}}. }
% BEFORE REVISION BEGIN
\begin{comment} 
Thus, we refer to those classes as \emph{composed} and
to their members that have declared contracts 
(as \texttt{SparseLU}) as {\emph{base members}}. 
\end{comment}
% BEFORE REVISION END 

\end{example} 

\paragraph{Integrating Aliasing}
 Now, we discuss how \emph{aliasing information} can be integrated with our technique.
 In \exref{ex:ex-alg} member \texttt{lu} of object \texttt{foo} can be aliased.
 Thus, we keep track of \lfa triplets for all base members
 instead of  constructing an explicit \lfa contract for a composed class (e.g.,
 \texttt{Foo}). Further, we would need to
 generalize an abstract state to a mapping of \emph{alias sets} to \lfa
 triplets. That is, the elements of abstract state would be   
     $\{a_1,a_2,\ldots,a_n \} \mapsto \langle E, D, P \rangle$ 
 % \noindent where $\{a_1,a_2,\ldots,a_n\} \subseteq \accesspath$. 
  where $\{a_1,a_2,\ldots,a_n\}$ is a set of access paths. 
 For example, when invoking method \texttt{setupLU1}
 we would need to apply its summary ($sum_1$) to triplets of {each} alias set 
 that contains \texttt{foo.lu} as an element. 
 Let $d_1 = \{ S_1 \mapsto t_1, S_2 \mapsto t_2, \ldots \}$ be 
 an abstract state 
 where $S_1$ and $S_2$ are the only keys such that 
 $\texttt{foo.lu} \in S_i$ for $i \in \{1,2\}$ and $t_1$ and $t_2$ 
 are some \lfa triplets. 

 \begin{lstlisting}
 // d1 = S1 -> t1, S2 -> t2, ... 
 foo.setupLU1(); // apply sum1 = {this.lu -> t3}
 // d2 = S1 -> apply t3 to t1, S2 -> apply t3 to t2, ...
\end{lstlisting}
 
 Above, at line 2 we would need to update bindings of $S_1$ and $S_2$ (.resp) by
 applying an \lfa triplet for \texttt{this.foo} from $sum_1$, that is $t_3$, to
 $t_1$ and $t_2$ (.resp). 
 The resulting abstract state $d_2$ is given at line 4. 
 % - summary computation is sound if it does not alter aliasing 
 % We remark that if a
 We remark that if a procedure does not alter aliases, 
 we can soundly compute and apply summaries, as shown above. 
 % We remark that if a procedure does not alter aliases, 
 % the summary computation is sound: there is no need to inline those procedures 
 % for the analysis.   

%\subsection{Domain}
% \smallskip
\paragraph{Algorithm}
We formally define our analysis, which presupposes the control-flow graph (CFG)
of a program. Let us write \accesspath~to denote the set of access paths.
\revision{Access paths model heap locations as 
paths used to access them: a program variable 
followed by a finite sequence of field accesses (e.g., $foo.a.b$). 
We use access paths as we want to explicitly track  
states of class members.}
% (i.e., a summary represents how a class' method 
% change states of its members). }
The abstract domain, denoted $\mathbb{D}$, maps access paths $\accesspath$ 
to \lfa triplets: 
% The abstract domain for analysis, denoted $\mathbb{D}$,
% is defined similarly as the mapping $\mathcal{L}_c$ (Def.~\ref{d:map}). 
% The difference is that $\mathbb{D}$ maps access paths $\accesspath$  
% rather than method identities: 
\begin{align*}
	\mathbb{D}: \accesspath \to \bigcup_{c \in \mathcal{C}} Cod(\mathcal{L}_c)
\end{align*}
\noindent As variables denoted by an access path in $\accesspath$ can be of any
declared class $c \in \mathcal{C}$, 
% which in turn has an associated \lfa (given
% by $\mathcal{L}_c$), 
the co-domain of $\mathbb{D}$ is the union of codomains of
$\mathcal{L}_c$ for all classes in a program. We remark that $\mathbb{D}$ is
sufficient for both checking and summary computation, as we will show in the
remaining of the section.

% \paragraph{Join Operator}
\begin{definition}[Join Operator]
\label{d:lta-join}
	We define $\bigsqcup: Cod(\mathcal{L}_c) \times Cod(\mathcal{L}_c) 
	\to Cod(\mathcal{L}_c)$
as follows: 
$\langle E_1, D_1, P_1 \rangle \sqcup 
\langle E_2, D_2, P_2 \rangle=  
\langle  E_1 \ \cap \ E_2 \setminus 
(\ D_1 \cup D_2),\ D_1 \cup D_2,\ P_1  \cup \ P_2 \rangle$.
% \begin{align*}
% 	&\langle E_1, D_1, P_1 \rangle \sqcup 
% 	\langle E_2, D_2, P_2 \rangle=  
% 	\langle  E_1 \ \cap \ E_2 \setminus 
% 	(\ D_1 \ \cup \ D_2),\ D_1 \ \cup \ D_2,\ P_1 \ \cup \ P_2 \rangle 
% \end{align*}
\end{definition}
The join operator on $Cod(\mathcal{L}_c)$ is lifted to $\mathbb{D}$ by taking the union of
un-matched entries in the mapping. 
%  if a key $a$ is defined in both mappings then
% $\sqcup$ is applied to its values; otherwise, we take the union of entries whose keys
% are present in only one map. 

% We remark that $\mathbb{D}$ is sufficient for both analysis and summary
% computation, as we will show in the remaining of the section. 

% \subsubsection{Algorithm}
The compositional analysis is given in~\algref{alg:lta}. It
expects a program's CFG and a series of contracts, expressed as \lfas 
annotation mappings (\defref{d:map}). If the
program violates the \lfa contracts, a warning is raised. 
\revision{For the sake of clarity we only return a boolean
indicating if a contract is violated (cf. \defref{def:warning}). 
In the actual implementation we provide more
elaborate error reporting.} The algorithm
traverses the CFG nodes top-down. For each node $v$, it first
collects information from its predecessors (denoted by $\mathsf{pred}(v)$) and joins them
as $\sigma$ (line 3). 
Then,  the algorithm  checks whether a method can be called in the given
abstract state $\sigma$ by predicate \textsf{guard()} (cf.~\algref{alg:guard}).
If the pre-condition is met, then the \textsf{transfer()} function
(cf.~\algref{alg:transfer}) is called on a node. We assume a collection of \lfa
contracts (given as $\mathcal{L}_{c_1}, \ldots, \mathcal{L}_{c_k}$), which is
input for \algref{alg:lta}, is accessible in \algref{alg:transfer} to avoid
explicit passing. 
% First, we remark that for simplicity we consider class members as 
% explicit arguments to class methods in the algorithm. 
Now, we define some useful functions and predicates. 
For the algorithm, we require that the constructor
disabling set is the complement of the enabling set: 

\begin{definition}[${well\_formed}(\mathcal{L}_c)$]
	\label{d:wellformed-lc}
	Let $c$ be a class, $\Sigma$ methods set of class $c$, and 
	 $\mathcal{L}_c$. Then, $\mathsf{well\_formed}(\mathcal{L}_c)=\textbf{true}$  
	 iff $\mathcal{L}_c(constr) = \langle E,  \Sigma \setminus E, P \rangle$.
	%  $D = \Sigma \setminus E$
	%  with $\mathcal{L}_c(constr) = \langle E, D, P \rangle$. 
\end{definition}

\begin{definition}[${warning}(\cdot)$]
	\label{def:warning}
	Let $G$ be a CFG and 
		$\mathcal{L}_1,\ldots, \mathcal{L}_k$  be a 
		collection of \lfas. We define 
		$\mathsf{warning}(G, \mathcal{L}_1,\ldots, \mathcal{L}_k)= \textbf{true}$
		if there is a 
		path in $G$ that violates some of $\mathcal{L}_i$ for $i \in \{1, \ldots, k\}$.  
\end{definition}

\begin{definition}[$exit\_node(\cdot)$]
	Let $v$ be a method call node. Then,  $\textsf{exit\_node}(v)$ denotes 
	 exit node $w$ of a method body corresponding to $v$. 
\end{definition}
% revision 
\revision{\begin{definition}[$actual\_arg(\cdot)$]
	\label{def:actual-arg}
	Let $v = Call-node[m_j(p_0:b_0, \ldots, p_n:b_n)]$ be a call node 
	where $p_0, \ldots, p_n$ are formal and $b_0, \ldots, b_n$ are actual arguments 
	and let $p \in \accesspath$. 
	We define $\textsf{actual\_arg}(p, v) = b_i$ if $p=p_i$ for $i \in \{0,\ldots,n \}$, otherwise 
	$\textsf{actual\_arg}(p, v)=p$. 
\end{definition}}

\revision{For convinience, we use  \emph{dot notation}  to access 
elements of \lfa triplets: 
	\begin{definition}[Dot notation for  \lfa triplets]
	Let $\sigma \in \mathbb{D}$ and $p \in \accesspath$. 
	Further, let 
	$\sigma[p] = \langle E_\sigma, D_\sigma, P_\sigma \rangle$.
	Then, we have $\sigma[p].E = E_\sigma$, $\sigma[p].D = D_\sigma$, and 
	$\sigma[p].P = P_\sigma$. 
\end{definition}}

\paragraph{Guard Predicate}
Predicate $\textsf{guard}(v, \sigma)$ checks whether a pre-condition for method
call node $v$ in the abstract state $\sigma$  is met (cf. \algref{alg:guard}). 
We represent a call node as $m_j(p_0:b_0,\ldots,p_n:b_n)$
% We represent a call node as follows: 
% \[Call-node[m_j(p_0:b_0,\ldots,p_n:b_n)]\]
where $p_i$ are formal and $b_i$ are actual arguments (for $i \in \{0, \ldots,
n\}$). Let $\sigma_w$ be a post-state of an exit node of method $m_j$. The
pre-condition is met if for all $b_i$ there are no elements in their
pre-condition set (i.e., the third element of $\sigma_w[b_i]$) that are also in
disabling set of the current abstract state $\sigma[b_i]$. 
% This checks if there
% is an error when computing summaries: method calls inside the procedure do not
% have to be enabled within the procedure (i.e., the enabling set of
% $\sigma[b_i]$) as this can be satisfied at a program where a procedure is
% called. On the other hand, method calls must not be disabled within the
% procedure (i.e., the disabling set of $\sigma[b_i]$). 
% OLD VERSION 
% This way, we
% uniformly check if there is an error for code checking and for computing
% summaries. That is, for code checking we could have checked if a pre-condition
% is a subset of the enabling set of $\sigma$, but this would not be a correct guard for the
% summary computation. In this case, method calls inside a procedure body do
% not have to be enabled within the procedure as this can be satisfied at a program
% point where a procedure is called. On the other hand, calling methods which are 
% previously disabled within the procedure (i.e. disabling set of $\sigma$) 
% is an error. 
For this predicate we need the property $D = \Sigma_{c_i}
\setminus E$, where $\Sigma_{c_i}$ is a set
of methods for class $c_i$.
% in order for \textsf{guard()}
% to correctly detect an error. 
This is ensured by condition 
$well\_formed(\mathcal{L}_{c_i})$ (\defref{d:wellformed-lc}) and 
by definition of \textsf{transfer()} (see below). 

% \begin{figure}[!h]
	\LinesNumbered
	% \SetAlgoNoLine

	\begin{algorithm}[t]
		\caption{\lfa Compositional Analysis}
		\label{alg:lta}
		% \begin{algorithmic}[1]
		\SetKwProg{lta}{Procedure \emph{lta}}{}{end}
		\SetKwProg{transfer}{Procedure \emph{transfer}}{}{end}
		\KwData{{G} : A program's CFG, 
		a collection of \lfa mappings: 
		$\mathcal{L}_{c_1}, \ldots, \mathcal{L}_{c_k}$ over classes $c_1, \ldots c_k$
		such that $\mathit{well\_formed}(\mathcal{L}_{c_i})$ for $i \in \{1,\ldots,k \}$
		} 
		\KwResult{$warning(G, \mathcal{L}_{c_1}, \ldots, \mathcal{L}_{c_k})$}
		Initialize ${NodeMap} : Node \to \mathbb{D}$ as an empty map\; 
		\ForEach(){v in forward({G}))}
		{
			${\sigma} = \bigsqcup_{w \in pred(v)} w$\; 
			% \lIf{{guard}(${v}$, ${\sigma}$)}{}
			% \leIf(){}{}{}
			% \leIf
			\leIf{{guard}(${v}$, ${\sigma}$)}
			{		% $\sigma$:=\emph{transfer}($v$,$\sigma$)\;
				${NodeMap[v]}$ := \emph{transfer}(${v}$,${\sigma}$);}
			{
			\Return{\textbf{True}}}
%			{Issue a warning}	
		}
	  \Return{\textbf{False}} 
		% \lta{CFG}{
		% 	\ForEach{Node in Forward CFG}
		% 	{
		% 		\emph{transfer}(Node) 
		% 	}
		% }
		% \end{algorithmic}
		\end{algorithm}

		\begin{algorithm}[t]
			\caption{Guard Predicate}
			\label{alg:guard}
			% \begin{algorithmic}[1]
			\SetKwProg{lta}{Procedure \emph{lta}}{}{end}
			\SetKwProg{transfer}{Procedure \emph{transfer}}{}{end}
			\KwData{$v$ : CFG node, $\sigma$ : Domain}
			\KwResult{\textbf{False} iff $v$ is a method call that cannot be called 
			in $\sigma$}
			% \KwResult{$warning(G, M_1, \ldots, M_k)$}
			\SetKwProg{guard}{Procedure \emph{guard}}{}{end}
			\guard{$(v, \sigma)$}
				{
					\Switch{$v$}
					{
						\Case{Call-node[$m_j(p_0:b_0,\ldots,p_n:b_n)$]}
						{
							Let $w = exit\_node(v)$\; 
							\For{$i \in \{0, \ldots, n\}$}
							{
								\lIf{	
									$\sigma_w[p_i].P \cap \sigma[b_i].D \not= \emptyset$
									}{\Return{\textbf{False}}}	
							}
							\Return{\textbf{True}}
						}
						\Other{
						\Return{\textbf{True}}
						}
					}
				}
				% \vspace*{-0.1cm}
			\end{algorithm}

			% \vspace*{-2cm}
\begin{algorithm}[!t]
	% \vspace*{-0.1cm}
	\caption{Transfer Function}
	\label{alg:transfer}
	% \begin{algorithmic}[1]
	\SetKwProg{lta}{Procedure \emph{lta}}{}{end}
	\SetKwProg{transfer}{Procedure \emph{transfer}}{}{end}
	\KwData{
		% $L_c$ element of $\mathcal{L}_c$ for each declared class $c$, 
		$v$ : CFG node, $\sigma$ : Domain}
	\KwResult{Output abstract state $\sigma' : Domain$}
	% Traverse Forward CFG\;
	\transfer{$(v, \sigma)$}{
		% \Begin{
			\Switch{$v$}
			{
				\Case{Entry-node[$m_j(p_0,\ldots,p_n)$]}
				{	
					% \SetKwComment{Comment}{Comment2}{Comment3}
					% \Comme
					%\textit{// Initialize domain based on LFA contract\; }
					Let $c_i$ be the class of method $m_j(p_0,
						\ldots, p_n)$\; 
					\leIf{There is $\mathcal{L}_{c_i}$}
					{	
						\Return{$\{ this \mapsto \mathcal{L}_{c_i}(m_j) \}$}\;
					}
					{
						\Return{EmptyMap}
					}
				}
				\Case{Call-node[$m_j(p_0:b_0,\ldots,p_n:b_n)$]}
				{
					% Let $w$ be a function exit node corresponding to $v$ and 
					% let $\sigma_w$ be a post-state of $w$\; 
					% Let $w = exit\_node(v)$ and $\sigma_w$ be an abstract state of $w$\; 
					Let  $\sigma_w$ be an abstract state of $exit\_node(v)$\; 
					Initialize $\sigma' := \sigma$\; 
					% \If{$b_i$ in $\sigma'$}{
					\If{\texttt{this} not in $\sigma'$}{
					
					\revision{\For{$ap$ in $dom(\sigma_w)$}{
						% Initialize $ap'$\; 
%						\Switch{$ap$}
%						{
%							\Case{\texttt{this}$.c_1. \ldots.c_n$}
%							{
%								$ap'$ := \texttt{this}$.c_1. \ldots.c_n$$\{b_0 / \mathtt{this} \}$\; 
%							}
%							\Case{$p_i$}{
%								$ap'$ := $b_i$\; 
%							}
%						}
%						\eIf{$ap$==\texttt{this}$.c_1. \ldots.c_n$}{}{}
						$ap' = {actual\_arg}(ap\{b_0 / \mathtt{this} \}, v)$\;
					% }
					\eIf{$ap'$ in $dom(\sigma)$}{
						$E' = (\sigma[ap'].E \ \cup \ \sigma_w[ap].E) 
						\setminus \sigma_w[ap].D$\;
						$D' = (\sigma[ap'].D \ \cup \ \sigma_w[ap].D) 
						\setminus \sigma_w[ap].E$\; 
						$P' = \sigma[ap'].P \ \cup \ (\sigma_w[ap].P \ 
						\setminus \ \sigma[ap'].E)$\;
						$\sigma'[ap']= \langle E', D', P' \rangle$\; 
					}{
						$\sigma'[ap']$ := $\sigma_w[ap]$\;
					}
					}
					}
					% \For{$i = 1 \to n$}
					% {
					% 	% \lIf{$b_1$ == this}{Skip to the next iteration}	
					% % \If{$b_i$ in $\sigma'$}{
					% % $\langle E_i, D_i, P_i \rangle = \sigma[b_i]$; 
					% % $\langle E^{m}_i, D^{m}_i, P^{m}_i \rangle = \sigma_w[p_i]$\; 
					% 	% $E'_i = (E_i \ \cup \ E^{m}_i) \setminus D^{m}_i$;
					% 	% $D'_i = (D_i \ \cup \ D^{m}_i) \setminus E^{m}_i$\; 
					% 	% $P' = P_i \ \cup \ (P^{m}_i \ \setminus \ E_i)$\;
					% 	$E'_i = (\sigma[b_i].E \ \cup \ \sigma_w[p_i].E) 
					% 	\setminus \sigma_w[p_i].D$\;
					% 	$D'_i = (\sigma[b_i].D \ \cup \ \sigma_w[p_i].D) 
					% 	\setminus \sigma_w[p_i].E$\; 
					% 	$P' = \sigma[b_i].P \ \cup \ (\sigma_w[p_i].P \ 
					% 	\setminus \ \sigma[b_i].E)$\;
					% 	$\sigma'[b_i]= \langle E'_i, D'_i, P'_i \rangle$\; 
					% 	}
					}
						\Return{$\sigma'$}
					}
					\Other{\Return{$\sigma$}}
				}
			% }
		}

		\end{algorithm}
	% \end{figure}

	% \subsubsection{Transfer Function}
	\paragraph{Transfer Function}
	The transfer function is given in~\algref{alg:transfer}. 
	It distinguishes between two types of CFG nodes: 

		\textbf{Entry-node:} (lines 3--6) This is a function entry node. For
		simplicity we represent it as $m_j(p_0, \ldots, p_n)$ where $m_j$ is a
		method name and $p_0, \ldots, p_n$ are formal arguments. 
		% We remark that for simplicity we consider class members as 
		% explicit arguments in the algorithm. 
		We assume $p_0$ is a reference to the receiver object
		(i.e., \textit{this}). 
		If method $m_j$ is defined in class
		$c_i$ that has user-supplied annotations $\mathcal{L}_{c_i}$, 
		in line 5 we initialize the domain to the singleton
		map (\textit{this} mapped to $\mathcal{L}_{c_i}(m_j)$). 
		Otherwise, we return an empty map meaning that a
		summary has to be computed. 
		
		\textbf{Call-node:} (lines 7--20)  
		%As in  \textsf{guard()}, we
		We represent a call node as $m_j(p_0:b_0, \ldots, p_n:b_n)$ where  
		we assume actual arguments $b_0, \ldots, b_n$ are access paths for objects 
		and 
		$b_0$ represents a receiver object. 
		The analysis is skipped if \textit{this} is in the domain (line 10):
		this means the method has user-entered annotations.
		% % BEFORE REVISION BEGIN 
		% Otherwise,  we compute a summary for each argument $b_i$ as follows.
		% % BEFORE REVISION END 
		\revision{Otherwise, we transfer an abstract state for each argument 
		$b_i$, but also for each \emph{class member} whose state is updated by $m_j$.}
		\revision{Thus, we consider all access paths in the domain of $\sigma_w$, 
		that is $ap \in dom(\sigma_w)$ (line 11).
		We construct access path $ap'$ given $ap$.   
		 We distinguish two cases: $ap$ denotes (i)~a member and  
		 (ii)~a formal argument of $m_j$. By line 12 we handle both cases. 
		 In the former case we know $ap$ has form $this.c_1. \ldots . c_n$. 
		 We construct $ap'$ as $ap$ with ${this}$ substituted 
		 for $b_0$ ($\textsf{actual\_arg}(\cdot)$ is the identity 
		 in this case, see \defref{def:actual-arg}): e.g., if receiver $b_0$ is $this.a$ 
		 and $ap$ is $this.c_1. \ldots. c_n$ then $ap' = this.a.c_1.\ldots.c_n$. 
		 In the latter case $ap$ denotes formal argument $p_i$ and 
		$\textsf{actual\_arg}(\cdot)$ returns corresponding actual argument 
		   $b_i$ (as  $p_i\{b_0 / this\} = p_i$).  
		   Now, as $ap'$ is determined we construct its \lfa triplet.
		If $ap'$ is not in the domain of $\sigma$ (line 13) 
		we copy a corresponding \lfa triplet from $\sigma_w$ (line 19).  
		Otherwise, we transfer elements of an \lfa triplet at $\sigma[ap']$  as follows.} 
		  % In line 16 we appropriately
		% accumulate enabling and disabling sets. 
		The resulting enabling set is obtained by (i)~adding methods that $m_j$
		enables ($\sigma_w[ap].E$)  to the current enabling set $\sigma[ap'].E$, and 
		(ii)~removing
		methods that $m_j$ disables ($\sigma_w[ap].D$), from it. 
		The disabling set $D'$
		is constructed in a complementary way.  
		Finally, the pre-condition set $\sigma[ap'].P$ is expanded with elements of
		$\sigma_w[ap].P$ that are not in the enabling set $\sigma[ap'].E$. 
		We remark that the
		property $D = \Sigma_{c_i} \setminus E$ is preserved by the definition
		of $E'$ and $D'$. Transfer is the identity on $\sigma$ for all other
		types of CFG nodes. We can see that for each method call we have
		constant number of bit-vector operations per argument. That is, \lfa
		analysis is insensitive to the number of states, as a set of states is
		abstracted as a single set. 
		% as we discuss in~\secref{s:correctness}.
		
		 Note, in our implementation we use several features specific to \textsc{Infer}:
(1)~\textsc{Infer}'s summaries which allow us to use a single  domain for intra
and inter procedural analysis; 
(2)~scheduling on CFG top-down traversal which
simplify the handling of branch statements. In principle, \lfa can be
implemented in other frameworks e.g., IFDS~\cite{IFDS}.

% \subsection{Correctness}
\paragraph{Correctness}
\label{s:correctness}
In a \lfa, we can abstract a set of states by the \emph{intersection}
of states in the set. 
That is, for $P \subseteq
Q$ all method call sequences accepted by each state in $P$ are also
accepted by the state that is the intersection of bits of states in the set. 
Theorem~\ref{t:state-intersection} formalizes this property. First we need an
auxiliary definition; let us write  $Cod(\cdot)$ to denote the codomain of a
mapping:

\begin{definition}[$\map{\cdot}(\cdot)$]
    \label{d:lfa-apply}
    Let $\langle E, D, P \rangle \in Cod(\mathcal{L}_c)$ and 
    $b \in \mathcal{B}^n$. 
	We define \\ $\map{\langle E, D, P \rangle}(b) = {b'}$ where 
	$b'=(b \cup E) \setminus D$ if $P \subseteq b$, and is  undefined otherwise. 
\end{definition}

\begin{restatable}[\lfa $\cap$-Property]{thm}{thmstates}
	% \label{t:states}
    \label{t:state-intersection}
	Let $M = (Q, \Sigma_c, \delta, q_{10^{n-1}}, \mathcal{L}_c)$,  
	$P \subseteq Q$, and $b_* = \bigcap_{q_b \in P} b$, then 
	% $P = \{ q_{b_1}, \ldots, q_{b_n} \} \subseteq Q$. 
	\begin{enumerate}
		\item For $m \in \Sigma_c$,  it holds: 
        $\delta(q_b, m)$ is defined for all  $q_b \in P$  
		iff $\delta(q_{b_*}, m)$ is defined. 
		% where 
		% $b_* = \bigcap_{q_b \in P} b$. 
		\item Let $\sigma=\mathcal{L}_c(m)$. 
		If $P' = \{ \delta(q_b, m) : q_b \in  P \}$  
		% and  $b_* = \bigcap_{q_b \in P} b$ 
		then $\bigcap_{q_b \in P'} b = \map{\sigma}(b_*)$. 
	\end{enumerate} 
\end{restatable}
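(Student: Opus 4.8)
The plan is to reduce both items to elementary properties of set intersection, working bitwise in the representation of Def.~\ref{d:bv}. First I would unfold the relevant definitions: writing $\sigma = \mathcal{L}_c(m) = \langle E, D, P_m\rangle$, where by the construction of $\mathcal{L}_c$ the pre-condition component $P_m$ is the singleton $\{m\}$, both $\delta(q_b, m)$ (Def.~\ref{d:lfa}) and $\map{\sigma}(b)$ (Def.~\ref{d:lfa-apply}) are defined exactly when $P_m \subseteq b$, and in that case both return $(b \cup E)\setminus D$. With this uniform description of the one-step map in hand, each item becomes a short calculation.

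For Item~1, I would observe that $\delta(q_b, m)$ is defined for every $q_b \in P$ iff $P_m \subseteq b$ for every such $b$, which is equivalent to $P_m \subseteq \bigcap_{q_b \in P} b = b_*$, i.e. to $\delta(q_{b_*}, m)$ being defined. Both directions follow from the basic fact that a set is contained in an intersection iff it is contained in every member: the forward direction uses that $P_m \subseteq b$ for all $q_b\in P$ forces $P_m\subseteq b_*$, and the backward direction uses $b_* \subseteq b$ for each $q_b \in P$.

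For Item~2, Item~1 guarantees that $P'$ and $\map{\sigma}(b_*)$ are simultaneously well-defined, so I may assume the common pre-condition $P_m\subseteq b_*$ holds. Each state in $P'$ is $\delta(q_b,m)=q_{(b\cup E)\setminus D}$ for some $q_b\in P$, so (since intersection is idempotent, any collapse of the set under $\delta$ is harmless) the intersection of the bit-vectors of the states in $P'$ is $\bigcap_{q_b \in P}\bigl((b \cup E)\setminus D\bigr)$, and the claim is the set identity
\[
\bigcap_{q_b \in P}\bigl((b \cup E)\setminus D\bigr)
= \Bigl(\bigl(\textstyle\bigcap_{q_b \in P} b\bigr) \cup E\Bigr)\setminus D
= (b_* \cup E)\setminus D
= \map{\sigma}(b_*).
\]
I would establish the first equality in two standard steps: first pull the fixed set-difference out of the intersection via $\bigcap_i (X_i \setminus D) = (\bigcap_i X_i)\setminus D$, then apply the distributive law $\bigcap_i (b_i \cup E) = (\bigcap_i b_i)\cup E$ to move $E$ outside. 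Because the representation is bitwise, each identity can be verified one bit position at a time if a fully explicit argument is wanted.

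The proof is essentially routine set algebra, so there is no single hard obstacle; the only point requiring genuine care is well-definedness — ensuring the guard $P_m \subseteq b$ is consistent across the whole set $P$ and its abstraction $b_*$, which is precisely what Item~1 supplies and why I prove the items in this order. I would also restrict attention to $P \neq \emptyset$, the only case relevant for abstracting a nonempty set of reachable states, since the empty intersection degenerates to the top bit-vector and the statement is vacuous there.
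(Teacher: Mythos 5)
Your proof is correct, and it reaches the paper's conclusion by a mildly different route. For Item~1 you argue both directions of the iff from $b_* \subseteq b$ and the characterization of definedness as $P_m \subseteq b$; the paper records essentially the same observation but only spells out one direction explicitly. The substantive difference is in Item~2: the paper proceeds by induction on the cardinality of $P$, splitting $P = P_0 \cup \{q_{b'}\}$ and using the binary identity $((b_*\cup E)\setminus D) \cap ((b'\cup E)\setminus D) = ((b_*\cap b')\cup E)\setminus D$ in the inductive step, whereas you invoke the $n$-ary laws $\bigcap_i(X_i \setminus D) = (\bigcap_i X_i)\setminus D$ and $\bigcap_i (b_i \cup E) = (\bigcap_i b_i)\cup E$ directly, with no induction. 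The computational content is identical --- the paper's induction is precisely a way of deriving your generalized identities from their binary instances --- so this is a repackaging rather than a new idea, but your version is somewhat cleaner and is more careful on three points the paper glosses over: proving Item~1 first so that $\map{\sigma}(b_*)$ is guaranteed to be defined in Item~2, noting that collapses in the image set $P'$ (two states mapping to the same successor) are harmless by idempotence of intersection, and restricting to $P \neq \emptyset$, without which the empty intersection degenerates to the full bit-vector and the set-difference law you use would fail.
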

\begin{proof} 
	By induction on cardinality of $P$ and \defref{d:lfa}. 
	See~\longversion{\appref{app:states}}{\cite{ASP22-full}} for details.
\end{proof}

Our \lfa-based algorithm (\algref{alg:lta}) interprets method
call sequences in the abstract state and joins them (using join from
\defref{d:lta-join}) following the control-flow of the program. Thus, we can
prove its correctness by separately establishing: (1) the correctness of the
interpretation of call sequences using 
a \emph{declarative} representation of the transfer function (\defref{d:transfer})  and
(2) the soundness of join operator
(\defref{d:lta-join}). 
% Futhermore, the algorithm works for a collection of \lfa
% contracts of different classes and analyze method calls for all program variables.
For brevity, we consider a single program 
object, as method call sequences for distinct objects are
analyzed independently.
%  in \algref{alg:transfer}  
% we can see that when interpreting $m_j(p_0:b_0, \ldots, p_n:b_n)$
% each variable $b_i$ is updated separately in the abstract state (on line 21)
% using only information from its corresponding pre-state (line 17). 
We define the \emph{declarative} transfer function as follows:

\begin{definition}[$\dtransfer_c(\cdot)$]
	\label{d:transfer}
	% Let $c \in \mathcal{C}$ be a class,
	% $M = (Q,
    % \Sigma_c, \delta, q_{10^{n-1}},\mathcal{L}_c)$ be a \lfa. 
	% Further, let $m \in \Sigma_c$ be a method,  
	% $\langle E^m, D^m, P^m \rangle =\mathcal{L}_c(m)$, and 
	% $\langle E, D, P \rangle \in Cod(\mathcal{L}_c)$. Then, 
	Let $c \in \mathcal{C}$ be a class, $\Sigma_c$ be a set of methods of $c$, and 
	$\mathcal{L}_c$ be a \lfa. 
	% $M = (Q,
    % \Sigma_c, \delta, q_{10^{n-1}},\mathcal{L}_c)$ be a \lfa. 
	Further, let $m \in \Sigma_c$ be a method,  
	$\langle E^m, D^m, P^m \rangle =\mathcal{L}_c(m)$, and 
	$\langle E, D, P \rangle \in Cod(\mathcal{L}_c)$. Then, 
\begin{align*} 
	\dtransfer_{c}(m, \langle E, D, P \rangle) = 
	\langle E', D', P' \rangle
\end{align*} 
	% $$\dtransfer_{c}(m, \langle E, D, P \rangle) = 
	% \langle E', D', P' \rangle$$ 

	\noindent where $E' = (E \ \cup \ E^{m}) \setminus D^{m}$, 
	$D' = (D \ \cup \ D^{m}) \setminus E^{m}$, 
	and $P' = P \ \cup \ (P^{m} \ \setminus \ E)$, 
	if $P^m \cap D = \emptyset$, and is undefined otherwise. 
	% $\dtransfer(\cdot)$ is expended to sequence of method calls as expected. 
	Let $m_1,\ldots,m_n, m_{n+1}$ be a method sequence and 
	$\phi = \langle E, D, P \rangle$, then 
	\begin{align*} 
		&\dtransfer_c(m_1,\ldots,m_n, m_{n+1}, \phi) = 
\dtransfer_c(m_{n+1}, \dtransfer_c(m_1,\ldots,m_n, \phi))
	\end{align*} 
% 	\begin{align*} 
% 		&\dtransfer_c(m_1,\ldots,m_n, m_{n+1}, \langle E, D, P \rangle) = \\
% & \qquad \qquad \qquad \dtransfer_c(m_{n+1}, \dtransfer_c(m_1,\ldots,m_n, \langle E, D, P \rangle))
% 	\end{align*} 
\end{definition}

% By \thmref{t:state-intersection} we have  
%  $\bigcup_{i \in I} \{q_{b_i}\} = q_{b'}$ where $b'=\bigcap_{i \in I} b_i$ 
%  for some index set $I$. Thus, the statement for the soundness of 
%  the join operator is as follows: 
\noindent Relying on \thmref{t:state-intersection}, we 
state the soundness of $\mathsf{join}$: 

 \begin{restatable}[Soundness of $\sqcup$]{thm}{thmjoin}
	\label{t:join}
		Let $q_b \in Q$ and 
		% $\mathcal{t}$
	$\phi_i = \langle E_i, D_i, P_i \rangle$ for $i \in \{1,2\}$. 
	Then, 	$\map{\phi_1}(b) \cap \map{\phi_2}(b) = \map{\phi_1 \sqcup \phi_2}(b)$. 
\end{restatable}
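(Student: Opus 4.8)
The plan is to prove the set equality by reducing it to elementary Boolean (set) algebra, after first checking that both sides are defined under exactly the same condition on $b$. By \defref{d:lfa-apply}, $\map{\phi_1}(b)$ is defined iff $P_1 \subseteq b$ and $\map{\phi_2}(b)$ iff $P_2 \subseteq b$, so the left-hand intersection is defined iff $P_1 \cup P_2 \subseteq b$. On the right, unfolding \defref{d:lta-join} gives a join whose third component is $P_1 \cup P_2$, so $\map{\phi_1 \sqcup \phi_2}(b)$ is likewise defined iff $P_1 \cup P_2 \subseteq b$. Hence the definedness (pre-condition) conditions coincide, and it remains to prove equality of the two resulting sets under the assumption $P_1 \cup P_2 \subseteq b$.

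For the set identity I would first simplify the right-hand side. Writing $E_\sqcup = (E_1 \cap E_2) \setminus (D_1 \cup D_2)$ and $D_\sqcup = D_1 \cup D_2$, the right-hand side is $(b \cup E_\sqcup) \setminus D_\sqcup$. Since $E_\sqcup$ is already disjoint from $D_\sqcup$, the inner ``$\setminus (D_1 \cup D_2)$'' becomes redundant once we subtract $D_\sqcup$ again at the outer level; concretely, using the identity $(b \cup (X \setminus Y)) \setminus Y = (b \cup X) \setminus Y$, the right-hand side reduces to $(b \cup (E_1 \cap E_2)) \setminus (D_1 \cup D_2)$.

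Then I would establish the core identity
\begin{align*}
\bigl((b \cup E_1) \setminus D_1\bigr) \cap \bigl((b \cup E_2) \setminus D_2\bigr)
&= (b \cup (E_1 \cap E_2)) \setminus (D_1 \cup D_2)
\end{align*}
by a pointwise membership argument. For an arbitrary $x$, membership in the left-hand side unfolds to $(x \in b \lor x \in E_1) \land x \notin D_1 \land (x \in b \lor x \in E_2) \land x \notin D_2$, while membership in the right-hand side unfolds to $(x \in b \lor (x \in E_1 \land x \in E_2)) \land x \notin D_1 \land x \notin D_2$. The two $x \notin D_i$ conjuncts match verbatim, and the remaining propositional formulas are equal by distributivity of $\lor$ over $\land$, namely $P \lor (Q \land R) = (P \lor Q) \land (P \lor R)$ with $P := (x \in b)$, $Q := (x \in E_1)$, $R := (x \in E_2)$. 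Combining the simplification of the previous paragraph with this identity yields the claimed equality of sets.

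I do not anticipate a genuine obstacle here: the statement is essentially distributivity dressed up in the transfer-function notation. The only places demanding care are (i)~making the definedness bookkeeping precise, so that the claimed equality is not vacuously comparing an undefined value with a defined one, and (ii)~the simplification that absorbs the inner set difference in $E_\sqcup$, which must be justified by the disjointness of $E_\sqcup$ from $D_\sqcup$ rather than asserted by inspection. Note also that the sets $P_i$ contribute nothing to the computed values beyond gating definedness, so they need only be tracked in the definedness step.
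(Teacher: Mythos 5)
Your proof is correct and follows essentially the same route as the paper's: both unfold \map{\cdot}(\cdot) via \defref{d:lfa-apply} and reduce the claim to elementary set laws --- distributivity of union over intersection plus the absorption identity $(b \cup (X \setminus Y)) \setminus Y = (b \cup X) \setminus Y$ --- with the paper presenting this as a single equational chain and you as a pointwise membership argument. Your only substantive addition is the explicit check that both sides are defined under the same condition $P_1 \cup P_2 \subseteq b$, a bookkeeping point the paper's proof silently omits.
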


\begin{proof} 
	By definitions \defref{d:lta-join} and \defref{d:lfa-apply}, and set laws. 
	See~\longversion{\appref{app:join}}{\cite{ASP22-full}} for details. 
\end{proof} 

With these auxiliary notions in place, we show the correctness 
of the transfer function (i.e., summary computation that is  
specialized for the code checking):  
\begin{restatable}[Correctness of $\dtransfer_{c}(\cdot)$]{thm}{thmdtranfer}
	\label{t:dtranfer}
	Let $M = (Q, \Sigma, \delta, q_{10^{n-1}}, \mathcal{L}_c)$. 
	Let $q_b \in Q$ and $m_1 \ldots m_n \in \Sigma^*$. Then 
	% and $\mathsf{t} = \langle E, D, P \rangle $. 
	\begin{align*}
			&\dtransfer_{c}(m_1 \ldots m_n, 
			\langle \emptyset, \emptyset, \emptyset, \rangle) =
			\langle E', D', P' \rangle
			\iff \hat{\delta}(q_b, m_1 \ldots m_n)=q_{b'} 
	\end{align*}
	\noindent where
	% $b' = (b \cup E') \setminus D'$, 
	$b' = \map{\langle E', D', P' \rangle}(b)$. 
	% and $P' \subseteq b$. 
\end{restatable}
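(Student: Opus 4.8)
The plan is to prove the biconditional by induction on the length $n$ of the sequence $\wtd m = m_1 \ldots m_n$, strengthening the statement so that the induction hypothesis also records (i) the invariant $E \cap D = \emptyset$ for every intermediate triple produced by $\dtransfer_c$, and (ii) a decomposition of the guard that threads definedness across the two sides. For the base case $\wtd m = \varepsilon$ both sides are trivially defined: $\dtransfer_c(\varepsilon, \langle \emptyset, \emptyset, \emptyset\rangle) = \langle \emptyset, \emptyset, \emptyset\rangle$ and $\hat\delta(q_b, \varepsilon) = q_b$, and indeed $\map{\langle\emptyset,\emptyset,\emptyset\rangle}(b) = (b \cup \emptyset)\setminus\emptyset = b$ with the precondition $\emptyset \subseteq b$ vacuously met.

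For the inductive step I would set $\langle E, D, P\rangle = \dtransfer_c(m_1\ldots m_n, \langle\emptyset,\emptyset,\emptyset\rangle)$ and, using the hypothesis, $\hat\delta(q_b, m_1\ldots m_n) = q_{b_n}$ with $b_n = (b\cup E)\setminus D$. The step splits into two obligations. The first is value equality: assuming both one-step transitions are defined, I must show that applying $\delta(\cdot, m_{n+1})$ to $b_n$ yields exactly $(b \cup E')\setminus D'$, where $E' = (E\cup E^{m_{n+1}})\setminus D^{m_{n+1}}$ and $D' = (D\cup D^{m_{n+1}})\setminus E^{m_{n+1}}$ are the sets produced by $\dtransfer_c$ (cf.~\defref{d:transfer}). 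Substituting $b_n = (b\cup E)\setminus D$ into $(b_n \cup E^{m_{n+1}})\setminus D^{m_{n+1}}$, this is a purely set-theoretic identity that I would discharge by an element-wise case analysis on membership in $E^{m_{n+1}}$ and $D^{m_{n+1}}$; the three cases ($x\in E^{m_{n+1}}$, $x\in D^{m_{n+1}}$, and $x$ in neither) are exhaustive and mutually exclusive precisely because the annotation language enforces $E^{m_{n+1}}\cap D^{m_{n+1}} = \emptyset$, and in each case the two sides agree. The invariant $E\cap D = \emptyset$ is preserved by the same two definitions, which I record for use at the next step.

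The second obligation, and the crux, is matching the definedness conditions. The automaton requires $P^{m_{n+1}} \subseteq b_n$, whereas $\dtransfer_c$ imposes only the $b$-independent check $P^{m_{n+1}} \cap D = \emptyset$, and the single global guard $P' \subseteq b$ is applied once, at the end, through $\map{\cdot}(b)$ (\defref{d:lfa-apply}). The key observation is the decomposition, valid since $b_n = (b\cup E)\setminus D$:
\[
P^{m_{n+1}} \subseteq b_n \iff P^{m_{n+1}} \cap D = \emptyset \ \text{ and } \ (P^{m_{n+1}} \setminus E) \subseteq b .
\]
The first conjunct is exactly the definedness side-condition of $\dtransfer_c$, while the second is precisely the fresh contribution $P^{m_{n+1}}\setminus E$ that $\dtransfer_c$ adds to the accumulator, since $P' = P \cup (P^{m_{n+1}}\setminus E)$. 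Unfolding this recurrence gives $P' = \bigcup_{i=1}^{n+1}(P^{m_i}\setminus E_{i-1})$, so the lone final check $P'\subseteq b$ is equivalent to the conjunction over all steps of $(P^{m_i}\setminus E_{i-1})\subseteq b$; combining this with the per-step disjointness checks reconstitutes exactly the automaton's chain of preconditions $\bigwedge_i (P^{m_i}\subseteq b_{i-1})$, yielding both directions at once.

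I expect the precondition threading, not the bit-vector algebra, to be the main obstacle. The subtlety is that once a method name enters the accumulator $P'$ it is never removed, even if a later method enables it; one must argue that this is nonetheless sound because the automaton genuinely required that name in the original $b$ at the earlier step where it was not yet enabled, so demanding it in $b$ at the very end is neither too strong nor too weak. Getting the induction hypothesis strong enough to carry both the disjointness invariant and this bookkeeping across steps is where the care lies; the remaining manipulations are routine applications of the set laws already used in \defref{d:transfer} and \defref{d:lfa-apply}.
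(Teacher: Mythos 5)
Your proof is correct and follows essentially the same route as the paper's: induction on the length of the method sequence, the same set-algebraic identity for the value part (the paper discharges it by substitution and De Morgan's laws rather than element-wise case analysis, relying on the same per-method disjointness $E^m \cap D^m = \emptyset$), and the same precondition bookkeeping. Your explicit guard decomposition $P^{m}\subseteq b_n \iff P^{m}\cap D=\emptyset \,\wedge\, (P^m\setminus E)\subseteq b$ is precisely the fact the paper uses inline---assumed and then derived separately in its soundness ($\Rightarrow$) and completeness ($\Leftarrow$) directions---so your single-biconditional packaging is a cleaner organization of the same argument rather than a different one.
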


\begin{proof} 
	By induction on the length of the method call sequence. 
	See~\longversion{\appref{app:transfer}}{\cite{ASP22-full}} for details. 
\end{proof}

\section{Evaluation}
\label{sec:evaluation}
We evaluate our technique to validate the following two claims:
\begin{description}\itemsep-1pt
	\item[\textbf{\em Claim-I: Smaller annotation overhead.}] The \lfa contract annotation overheads 
        are smaller in terms of atomic annotations (e.g., @Post(...), @Enable(...)) than both competing analyses. 
	\item[\textbf{\em Claim-II: Improved scalability on large code and contracts.}] Our analysis scales 
        better than the competing analyzers for our use case on two dimensions, namely, caller code size and contract size.
\end{description}

\paragraph{Experimental Setup}
We used an Intel(R) Core(TM) i9-9880H  CPU  at  2.3
GHz  with  16GB  of  physical  
RAM running macOS 11.6  on the bare-metal. The experiments were conducted in
isolation without virtualization so that runtime results are robust.  All
experiments shown here are run in single-thread for \textsc{Infer} 1.1.0 running
with OCaml 4.11.1.

%\paragraph{Implementations Under Comparison}
We implement two analyses in \textsc{Infer}, namely \lfa and DFA, and use the 
default \textsc{Infer} typestate analysis \textsc{Topl} as a baseline comparison. More in details:
(1) \lfa: The \textsc{Infer} implementation of the technique described in this
paper. (2) DFA: A lightweight DFA-based typestate implementation based on an DFA-based analysis 
implemented in \textsc{Infer}. We translate \lfa annotations to a
minimal DFA and perform the analysis.  (3) \textsc{Topl}: An industrial typestate analyzer, implemented in
\textsc{Infer}~\cite{topl}. This typestate analysis is designed for high
precision and not for low-latency environments. It uses \textsc{Pulse}, an \textsc{Infer}
memory safety analysis, which provides it with alias information. We include it
in our evaluation as a baseline state-of-the-art typestate analysis, i.e., an
off-the-shelf industrial strength tool we could hypothetically use. We note our 
benchmarks do not require aliasing and in theory \textsc{Pulse} is not required.

%\paragraph{Benchmark Characteristics}
We analyze a benchmark of 18 contracts that specify common patterns of locally
dependent contract annotations for a class. 
\revision{Moreover, we auto-generate 122 client programs 
parametrized by lines of code, number of composed classes,  
if-branches, and loops. Note, the code is such that it
does not invoke the need for aliasing (as we do not support it yet in our \lfa
implementation).}
Client programs follow the compositional patterns we described
in~\exref{ex:ex-alg}; which can also be found in~\cite{papaya}. 

The annotations
for \lfa are manually specified; from them, we generate minimal DFAs
representations in DFA annotation format and \textsc{Topl} annotation format.

Our use case is to integrate static analyses in interactive IDEs e.g., Microsoft
Visual Studio Code~\cite{nblyzer}, so that code can
be analyzed at coding time. For this reason, our use case requires low-latency
execution of the static analysis. Our SLA is based on the RAIL user-centric
performance model~\cite{rail}.

\paragraph{Usability Evaluation} Fig.~\ref{fig:specsize} outlines the key features of the 18 contracts we
considered, called CR-1 -- CR-18. In~\longversion{\appref{app:cr4}}{\cite{ASP22-full}}
 we detail CR-4 as an example. For each contract,
we specify the number of methods, the number of DFA states the contract corresponds to, and number of 
atomic annotation terms in \lfa, DFA, and \textsc{Topl}. An atomic annotation term is a standalone annotation in the 
given annotation language.  We can observe that as the contract sizes
increase in number of states, the annotation overhead for DFA and \textsc{Topl} increase
significantly. On the other hand, the annotation overhead for \lfa remain largely
constant wrt. state increase and increases rather proportionally with the number
of methods in a contract. Observe that for contracts on classes with 4 or more
methods, a manual specification using DFA or \textsc{Topl} annotations becomes
impractical. Overall, we validate Claim-I by the fact that \lfa requires less
annotation overhead on all of the contracts, making contract specification more 
practical.

\begin{figure}[!t]
\begin{center}
    \resizebox{12cm}{!}{
    
\footnotesize
\begin{tabular}{l l} 
\begin{tabular}{ | l | l | l || l | l | l | }
    \hline 
    %\rowcolor{gray!25}
Contract & \#methods & \#states & \#BFA & \#DFA & \#TOPL \\ 
 \hline 
CR-1 & 3 & 2 & 3 & 5 & 9 \\
 \hline 
CR-2 & 3 & 3 & 5 & 5 & 14 \\
 \hline 
CR-3 & 3 & 5 & 4  & 8 & 25 \\
% CR-3 & 3 & 5 & 4  & 7 & 25 \\
 \hline 
CR-4 & 5 & 5 & 5 & 10 & 24 \\
 \hline 
CR-5 & 5 & 9 & 8 & 29 & 71 \\
 \hline 
CR-6 & 5 & 14 & 9 & 36 & 116 \\
\hline 
CR-7 & 7 & 18 & 12 & 85 & 213 \\
\hline 
CR-8 & 7 & 30 & 10 & 120 & 323 \\
\hline 
CR-9 & 7 & 41 & 12 & 157 & 460 \\
\hline 
\end{tabular}
& 
\begin{tabular}{ | l | l | l || l | l | l | }
    \hline 
    %\rowcolor{gray!25}
Contract & \#methods & \#states & \#BFA & \#DFA & \#TOPL \\ 
 \hline 
CR-10 & 10 & 85 & 18 & 568 & 1407 \\
% \hline 
% CR-11 & 10 & 85 & 15 & 425 & 1168 \\
\hline 
CR-11 & 14 & 100 & 17 & 940 & 1884 \\
\hline 
% CR-13 & 11 & 156 & 20 & 1079 & 2828 \\
% \hline 
% CR-14 & 11 & 292 & 20 & 1881 & 5108 \\
% \hline 
% % CR-15 & 13 & 403 & 24 & 2788 & 8052 \\
% % \hline 
% CR-16 & 15 & 522 & 19 & 4237 & 10423 \\
% \hline 
% CR-17 & 15 & 603 & 28 & 5428 & 14307 \\
% \hline 
% CR-18 & 15 & 845 & 29 & 7194 & 19857 \\
% \hline 
% CR-19 & 15 & 991 & 26 & 8121 & 23113 \\
% \hline 
CR-12 & 14 & 1044 & 32 & 7766 & 20704 \\
\hline 
CR-13 & 14 & 1628 & 21 & 13558 & 33740 \\
% \hline 
% CR-22 & 19 & 1607 & 36 & 117379 & 47401 \\
% \hline 
% CR-23 & 19 & 1940 & 35 & 20025 & 57315 \\
\hline 
CR-14 & 14 & 2322 & 21 & 15529 & 47068 \\
\hline 
CR-15 & 14 & 2644 & 24 & 26014 & 61846 \\
\hline 
CR-16 & 16 & 3138 & 29 & 38345 & 88134 \\
\hline 
CR-17 & 18 & 3638 & 23 & 39423 & 91120 \\
\hline 
CR-18 & 18 & 4000 & 27 & 41092 & 101185 \\
% \hline 
% CR-27 & 19 & 4588 & 27 & 46976 & 117828 \\
% \hline 
% CR-28 & 19 & 5012 & 27 & 58206 & 133035 \\
% \hline 
% CR-29 & 19 & 5531 & 27 & 55640 & 143227 \\
% \hline 
% CR-30 & 21 & 6272 & 32 & 80264 & 179368 \\
% \hline 
% CR-31 & 21 & 7626 & 32 & 101997 & 224187 \\
% \hline 
% CR-32 & 21 & 7958 & 25 & 91347 & 223002 \\
\hline 
\end{tabular}
\end{tabular} 

    }
\end{center}
\vspace*{-0.5cm}
    \caption{Details of the 18 contracts in our evaluation.}
    \label{fig:specsize}
    \vspace*{-0.0cm}
\end{figure}

\captionsetup[figure]{font=small}
\begin{figure*}[!t]
    \centering
    \begin{subfigure}[b]{\textwidth}
        \begin{subfigure}[b]{0.48\linewidth}
              % %%%% USER TIME MEASURED
  %% SPEEDUPS 
  % [5.230072681468788, 5.48196271144388, 5.925401138484582, 6.967590693098268]
    \begin{tikzpicture}[scale=0.62]
        \begin{axis}[
            title={BFA vs DFA},
            xlabel={Number of states [k states]},
            ylabel={Time [in s]},
            xmin=0, xmax=4,
            ymin=0, ymax=6.5,
            xtick={0.1, 0.5, 1.0, 1.5, 2.0, 2.5, 3.0, 3.5, 4.0},
            % ytick={0.2, 0.3, 0.4, 0.5, 0.6, 0.7, 0.8, 0.9, 1.0, 1.1, 1.2, 1.3, 1.4, 
            % 1.5, 1.6, 1.7, 1.8, 1.9, 2.0, 2.5, 3.0, 3.5, 4.0},
            ytick={0.5, 1.0,  
            1.5, 2.0, 2.5, 3.0, 3.5, 4.0, 4.5, 5.0, 5.5, 6.0, 6.5},
	    % legend pos=outer north east,
            legend pos=north west, 
            ymajorgrids=true,
            grid style=dashed,
            style=thick, 
        ]

%   \addplot[
%     color=red, 
%     mark = none,
%     very thick,dotted,
%     domain=0:8
%   ]
%   {
%     (1)
%   };
       
%  \addlegendentry{SLA}       
  
 %  FOOS 4 -----------------------------------------------------------------
   % BFA 
   \addplot[
    color=blue,
    mark=o,
    ]
    coordinates {
      (0.10, 0.28)
      (0.52, 0.26)
      (1.04, 0.26)
      (1.63, 0.27)
      (2.32, 0.26)
      (2.64, 0.27)
      (3.14, 0.27)
      (3.64, 0.28)
      (4.00, 0.26)
    };
    \addlegendentry{BFA: 4}

   % DFA
   \addplot[
    color=red,
    mark=o,
    ]
    coordinates {
      (0.10, 0.36)
      (0.52, 0.74)
      (1.04, 1.06)
      (1.63, 1.48)
      (2.32, 1.75)
      (2.64, 1.78)
      (3.14, 2.32)
      (3.64, 2.29)
      (4.00, 2.29)
    };
    \addlegendentry{DFA: 4}

 %  FOOS 6 -----------------------------------------------------------------
   % BFA 
   \addplot[
    color=blue,
    mark=triangle,
    ]
    coordinates {
      (0.10, 0.31)
      (0.52, 0.33)
      (1.04, 0.31)
      (1.63, 0.34)
      (2.32, 0.31)
      (2.64, 0.31)
      (3.14, 0.33)
      (3.64, 0.32)
      (4.00, 0.31)
    };
    \addlegendentry{BFA: 6}

   % DFA
   \addplot[
    color=red,
    mark=triangle,
    ]
    coordinates {
      (0.10, 0.39)
      (0.52, 0.85)
      (1.04, 1.09)
      (1.63, 1.69)
      (2.32, 1.75)
      (2.64, 2.20)
      (3.14, 3.20)
      (3.64, 2.96)
      (4.00, 4.03)
    };
    \addlegendentry{DFA: 6}

 %  FOOS 8 -----------------------------------------------------------------
   % BFA 
   \addplot[
    color=blue,
    mark=diamond,
    ]
    coordinates {
      (0.10, 0.37)
      (0.52, 0.36)
      (1.04, 0.38)
      (1.63, 0.36)
      (2.32, 0.36)
      (2.64, 0.38)
      (3.14, 0.37)
      (3.64, 0.37)
      (4.00, 0.36)
    };
    \addlegendentry{BFA: 8}

   % DFA
   \addplot[
    color=red,
    mark=diamond,
    ]
    coordinates {
      (0.10, 0.49)
      (0.52, 1.03)
      (1.04, 1.41)
      (1.63, 2.09)
      (2.32, 2.05)
      (2.64, 3.40)
      (3.14, 3.90)
      (3.64, 3.35)
      (4.00, 4.48)
    };
    \addlegendentry{DFA: 8}

 %  FOOS 10 -----------------------------------------------------------------
   % BFA 
   \addplot[
    color=blue,
    mark=square,
    ]
    coordinates {
      (0.10, 0.43)
      (0.52, 0.42)
      (1.04, 0.44)
      (1.63, 0.42)
      (2.32, 0.42)
      (2.64, 0.42)
      (3.14, 0.43)
      (3.64, 0.41)
      (4.00, 0.40)
    };
    \addlegendentry{BFA: 10}

   % DFA
   \addplot[
    color=red,
    mark=square,
    ]
    coordinates {
      (0.10, 0.64)
      (0.52, 1.00)
      (1.04, 1.90)
      (1.63, 2.59)
      (2.32, 3.05)
      (2.64, 3.95)
      (3.14, 5.87)
      (3.64, 6.04)
      (4.00, 5.39)
    };
    \addlegendentry{DFA: 10}

        \end{axis}
        
        \end{tikzpicture}
            \vspace*{-0.5cm} 
            \caption{DFA vs \lfa execution comparison on composed contracts (500-1k LoC)\label{sf:dfa2}}
        \end{subfigure}\hfill
        \begin{subfigure}[b]{0.48\linewidth}
              % %%%% USER TIME MEASURED
  %% SPEEDUPS 
  % [1.2771419165028066, 1.7450925996931788]
  % [5.230072681468788, 5.48196271144388, 5.925401138484582, 6.967590693098268]
    \begin{tikzpicture}[scale=0.62]
        \begin{axis}[
            title={BFA vs DFA},
            xlabel={Number of states},
            ylabel={Time [in s]},
            xmin=0, xmax=90,
            ymin=0, ymax=11.0,
            xtick={5, 10.0, 20.0, 30.0, 40.0, 50.0, 60.0, 70.0, 80.0, 90},
            % ytick={0.2, 0.3, 0.4, 0.5, 0.6, 0.7, 0.8, 0.9, 1.0, 1.1, 1.2, 1.3, 1.4, 
            % 1.5, 1.6, 1.7, 1.8, 1.9, 2.0, 2.5, 3.0, 3.5, 4.0},
            ytick={0.5, 1.0,  
            1.5, 2.0, 2.5, 3.0, 3.5, 4.0, 4.5, 5.0, 5.5, 6.0, 6.5, 
            7,7.5,8,8.5,9,9.5,10,10.5,11},
	    % legend pos=outer north east,
            legend pos=north west, 
            ymajorgrids=true,
            grid style=dashed,
            style=thick, 
        ]

  % \addplot[
  %   color=red, 
  %   mark = none,
  %   very thick,dotted,
  %   domain=0:8
  % ]
  % {
  %   (1)
  % };
       
%  \addlegendentry{SLA}       
  
 %  FOOS 8 -----------------------------------------------------------------
   % BFA 
   \addplot[
    color=blue,
    mark=o,
    ]
    coordinates {
      (5, 2.334)
      (9, 2.532)
      (14, 2.570)
      (18, 2.638)
      (30, 2.180)
      (41, 2.460)
      (85, 3.014)
      
    };
    \addlegendentry{BFA: 8}

   % DFA
   \addplot[
    color=red,
    mark=o,
    ]
    coordinates {
      (5, 2.380)
      (9, 2.610)
      (14, 2.554)
      (18, 3.730)
      (30, 3.428)
      (41, 3.392)
      (85, 5.216)
    };
    \addlegendentry{DFA: 8}

 %  FOOS 20 -----------------------------------------------------------------
   % BFA 
   \addplot[
    color=blue,
    mark=square,
    ]
    coordinates {
      (5, 2.900)
      (9, 3.198)
      (14, 2.828)
      (18, 3.452)
      (30, 2.542)
      (41, 3.068)
      (85, 3.484)    
    };
    \addlegendentry{BFA: 20}

   % DFA
   \addplot[
    color=red,
    mark=square,
    ]
    coordinates {
      (5, 3.160)
      (9, 3.866)
      (14, 3.554)
      (18, 6.644)
      (30, 4.568)
      (41, 8.968)
      (85, 10.260)   
    };
    \addlegendentry{DFA: 20}

        \end{axis}
        
        \end{tikzpicture}
              \vspace*{-0.5cm}
            \caption{DFA vs \lfa execution comparison on composed contracts (15k LoC)~\label{sf:dfa1}}
        \end{subfigure}\hfill 
    \end{subfigure}\hfill
    \begin{subfigure}[b]{\textwidth}
        \begin{subfigure}[b]{0.48\textwidth}
        %% SPEEDUPS 
% [554.5418965707552, 464.5229723247009, 401.35287851620325, 329.0491422016097]

\begin{tikzpicture}[scale=0.62]
%\tikzstyle{every node}=[font=\tiny]
    \begin{axis}[
        title={BFA vs TOPL},
        xlabel={Number of states [k states]},
        ylabel={Time [in s]},
        xmin=0, xmax=4,
        ymin=0, ymax=450,
        ymode=log,
        xtick={0.1, 0.5, 1.0, 1.5, 2.0, 2.5, 3.0, 3.5, 4.0},
        % ytick={0.2, 0.3, 0.4, 0.5, 0.6, 0.7, 0.8, 0.9, 1.0, 1.1, 1.2, 1.3, 1.4, 
        % 1.5, 1.6, 1.7, 1.8, 1.9, 2.0, 2.5, 3.0, 3.5, 4.0},
        % ytick={10, 20, 30, 40, 50, 60, 70, 80, 90, 100, 110, 120},
        %ytick={1, 50, 100, 150, 200, 250, 300, 350, 400, 450},
        legend pos=north west,
        ymajorgrids=true,
        grid style=dashed,
        style=thick, 
        %width=0.8\textwidth,
	% legend pos=outer north east,
    ]

% \addplot[
% color=red, 
% mark = none,
% very thick,dotted,
% domain=0:8
% ]
% {
% (1)
% };
   
% \addlegendentry{SLA}       

%  FOOS 4 -----------------------------------------------------------------
% BFA 
\addplot[
color=blue,
mark=o,
]
coordinates {
  (0.10, 0.28)
  (0.52, 0.26)
  (1.04, 0.26)
  (1.63, 0.27)
  (2.32, 0.26)
  (2.64, 0.27)
  (3.14, 0.27)
  (3.64, 0.28)
  (4.00, 0.26)
};
\addlegendentry{BFA: 4}

% DFA
\addplot[
color=red,
mark=o,
]
coordinates {
  (0.10, 4.23)
  (0.52, 25.52)
  (1.04, 47.60)
  (1.63, 99.38)
  (2.32, 145.97)
  (2.64, 225.58)
  (3.14, 364.80)
  (3.64, 354.30)
  (4.00, 438.37)
};
\addlegendentry{TOPL: 4}

%  FOOS 6 -----------------------------------------------------------------
% BFA 
\addplot[
color=blue,
mark=triangle,
]
coordinates {
  (0.10, 0.31)
  (0.52, 0.33)
  (1.04, 0.31)
  (1.63, 0.34)
  (2.32, 0.31)
  (2.64, 0.31)
  (3.14, 0.33)
  (3.64, 0.32)
  (4.00, 0.31)
};
\addlegendentry{BFA: 6}

% DFA
\addplot[
color=red,
mark=triangle,
]
coordinates {
    (0.10, 4.02)
    (0.52, 25.38)
    (1.04, 58.27)
    (1.63, 96.97)
    (2.32, 150.08)
    (2.64, 197.22)
    (3.14, 344.77)
    (3.64, 353.73)
    (4.00, 458.26)
};
\addlegendentry{TOPL: 6}

%  FOOS 8 -----------------------------------------------------------------
% BFA 
\addplot[
color=blue,
mark=diamond,
]
coordinates {
  (0.10, 0.37)
  (0.52, 0.36)
  (1.04, 0.38)
  (1.63, 0.36)
  (2.32, 0.36)
  (2.64, 0.38)
  (3.14, 0.37)
  (3.64, 0.37)
  (4.00, 0.36)
};
\addlegendentry{BFA: 8}

% DFA
\addplot[
color=red,
mark=diamond,
]
coordinates {
  (0.10, 4.02)
  (0.52, 26.64)
  (1.04, 60.39)
  (1.63, 88.74)
  (2.32, 135.21)
  (2.64, 214.65)
  (3.14, 351.20)
  (3.64, 335.19)
  (4.00, 458.34)
};
\addlegendentry{TOPL: 8}

%  FOOS 10 -----------------------------------------------------------------
% BFA 
\addplot[
color=blue,
mark=square,
]
coordinates {
  (0.10, 0.43)
  (0.52, 0.42)
  (1.04, 0.44)
  (1.63, 0.42)
  (2.32, 0.42)
  (2.64, 0.42)
  (3.14, 0.43)
  (3.64, 0.41)
  (4.00, 0.40)
};
\addlegendentry{BFA: 10}

% DFA
\addplot[
color=red,
mark=square,
]
coordinates {
  (0.10, 3.82)
  (0.52, 23.77)
  (1.04, 49.69)
  (1.63, 106.55)
  (2.32, 135.72)
  (2.64, 183.74)
  (3.14, 309.33)
  (3.64, 331.70)
  (4.00, 412.08)
};
\addlegendentry{TOPL: 10}

    \end{axis}
    
    \end{tikzpicture}
        \vspace*{-0.5cm}
        \caption{\textsc{Topl} vs \lfa comparison on composed contracts (500-1k LoC)\label{sf:topl2}}
        \end{subfigure}\hfill 
    \begin{subfigure}[b]{0.48\textwidth}
        %% SPEEDUPS 
% [6.37, 6.0] 
% [554.5418965707552, 464.5229723247009, 401.35287851620325, 329.0491422016097]

\begin{tikzpicture}[thick, scale=0.62]
%\tikzstyle{every node}=[font=\tiny]
    \begin{axis}[
        title={BFA vs TOPL},
        xlabel={Number of states},
        ylabel={Time [in s]},
        xmin=0, xmax=90,
        ymin=0, ymax=60,
        % ymode=log,
        xtick={5, 10.0, 20.0, 30.0, 40.0, 50.0, 60.0, 70.0, 80.0, 90},
        ytick={3, 5, 10,20,30,40,50,60},
        legend pos=north west,
        ymajorgrids=true,
        grid style=dashed,
        style=thick, 
        %width=0.8\textwidth,
	% legend pos=outer north east,
    ]

% \addplot[
% color=red, 
% mark = none,
% very thick,dotted,
% domain=0:8
% ]
% {
% (1)
% };
   
% \addlegendentry{SLA}       

%  FOOS 8 -----------------------------------------------------------------
% BFA 
\addplot[
color=blue,
mark=o,
]
coordinates {
  (5, 2.260)
  (9, 2.470)
  (14, 2.590)
  (18, 2.690)
  (30, 2.160)
  (41, 2.450)
  (85, 3.090)  
};
\addlegendentry{BFA: 8}

% TOPL
\addplot[
color=red,
mark=o,
]
coordinates {
  (5, 12.830)
  (9, 10.740)
  (14, 12.420)
  (18, 16.120)
  (30, 12.040)
  (41, 16.220)
  (85, 51.420)  
};
\addlegendentry{TOPL: 8}

%  FOOS 20 -----------------------------------------------------------------
% BFA 
\addplot[
color=blue,
mark=square,
]
coordinates {
  (5, 2.880)
  (9, 3.110)
  (14, 2.770)
  (18, 3.490)
  (30, 2.600)
  (41, 2.980)
  (85, 3.560)
};
\addlegendentry{BFA: 20}

% DFA
\addplot[
color=red,
mark=square,
]
coordinates {
  (5, 14.750)
  (9, 12.270)
  (14, 12.870)
  (18, 19.120)
  (30, 13.550)
  (41, 20.750)
  (85, 54.280)
};
\addlegendentry{TOPL: 20}

    \end{axis}
    
    \end{tikzpicture}
        \vspace*{-0.5cm}
        \caption{\textsc{Topl} vs \lfa comparison on composed contracts (15k LoC)\label{sf:topl1}}
    \end{subfigure}
\end{subfigure}
    \vspace*{-0.5cm}
    \caption{Runtime comparisons.\label{sf:exectimes}
    % \revision{
        Each line represents a different number of
        base classes composed in a client code.
        % in a client code (given in legends). 
    % Each line represents different number of base classes (given in legends)
    %  that a class 
    % of client programs composes.
    }
\end{figure*}

\paragraph{Performance Evaluation} 
\revision{Recall that we distinguish between \emph{base} and \emph{composed} classes: the former 
    have a user-entered contract, and the latter have contracts that are implicitly inferred based on those of
    their members (that could be either base or composed classes themselves).
    The total number of base classes in a composed class and contract size (i.e., the
    number of states in a minimal DFA that is a translation of a \lfa contract)
    play the most significant roles in execution-time. In~\figref{sf:exectimes} we present a
comparison of analyzer execution-times (y-axis) with contract size (x-axis),
where each line in the graph represents a different number of base classes composed
in a given class (given in legends).} 

\noindent\emph{Comparing \lfa analysis against  DFA
analysis.~} 
\textbf{Fig.~\ref{sf:dfa2}} compares various class compositions (with contracts)
specified in the legend, for client programs of 500-1K LoC. The DFA
implementation sharply increases in execution-time as the number of states
increases. The \lfa implementation remains rather constant, always under the SLA
of 1 seconds. Overall, \lfa produces a geometric mean speedup over DFA of
5.52$\times$. \textbf{Fig.~\ref{sf:dfa1}} compares various class compositions
for client programs of 15K LoC. Both implementations fail to meet the SLA;
however, the \lfa is close and exhibits constant behaviour regardless of the
number of states in the contract. The DFA implementation is rather erratic,
tending to sharply increase in execution-time as the number of states increases.
Overall, \lfa produces a geometric mean speedup over DFA of 1.5$\times$.

\noindent\emph{Comparing \lfa-based analysis vs TOPL typestate 
implementations (Execution time).~} 
Here again client programs do not require aliasing. \textbf{Fig.~\ref{sf:topl2}}
compares various class compositions for client programs of 500-1K LoC. The \textsc{Topl}
implementation sharply increases in execution-time as the number of states
increases, quickly missing the SLA. In contrast, the \lfa implementation remains
constant always under the SLA. Overall, \lfa produces a geometric mean speedup
over \textsc{Topl} of 6.59$\times$. \textbf{Fig.~\ref{sf:topl1}} compares various class
compositions for client programs of 15K LoC. Both implementations fail to meet
the SLA. The \textsc{Topl} implementation remains constant until $\sim$30 states and then
rapidly increases in execution time. Overall, \lfa produces a geometric mean
speedup over \textsc{Topl} of 287.86$\times$. 

Overall, we validate Claim-II by showing that our technique removes state as a
factor of performance degradation at the expense of limited but suffice contract
expressively. Even when using client programs of 15K LoC, we remain close to our
SLA and with potential to achieve it with further optimizations.

%\paragraph{Memory Usage}
% \subparagraph*{Memory Usage}
%In Figure~\ref{fig:toplmem} we compare the memory usage of our \lfa-based
%analysis and TOPL. Our \lfa-based analysis remains at approx. 35 MB of memory
%usage, regardless of number of lines of code or number of composed objects. On
%the other hand, TOPL increases with the number of states, peaking at approx. 370
%MB. In Figure~\ref{fig:dfamem} we compare the memory usage of our \lfa-based
%analysis and DFA. DFA scales poorly compared to the other approaches, peaking at
%approx 1.2 GB.

%Overall, we further strengthen Claim-II as \lfa uses under 50 MB of memory in
%all configurations. In contrast, DFA peaks at approx. 1.2 GB of memory and TOPL
%at approx. 400 MB of memory, which  in some environments e.g., browsers, IDEs
%can contribute to degraded performance.

% \smallskip
\section{Related Work}
\label{sec:relatedwork}
We focus on comparisons with restricted forms of typestate contracts. We refer
to the typestate
literature~\cite{DBLP:journals/tse/StromY86,lam,DeLineECOOP2004,clara,fugue} for
a more general treatment. The work~\cite{KelloggRSSE2020} proposes restricted
form of typestates tailored for use-case of the object construction using the
builder pattern. This approach is restricted in that it only accumulates called
methods in an abstract (monotonic) state, and it does not require aliasing for
supported contracts. Compared to our approach, we share the idea of specifying
typestate without explicitly mentioning states. On the other hand, their
technique is less expressive than our annotations. They cannot express various
properties we can (e.g., the property ``cannot call a method''). Similarly,
\cite{fahndrich2003heap} defines heap-monotonic typestates where  monotonicity
can be seen as a restriction. It can be performed without an alias analysis. 

Recent work on the \textsc{Rapid} analyzer~\cite{RAPID} aims to verify
cloud-based APIs usage.  It combines \emph{local} type-state with global
value-flow analysis. Locality of type-state checking in their work is related to
aliasing, not to type-state specification as  in our work. Their type-state
approach is DFA-based. They also highlight the state explosion problem for usual
contracts found in practice, where the set of methods has to be invoked prior to
some event. In comparison, we allow more granular contract specifications with a
very large number of states while avoiding an explicit DFA. The \textsc{Fugue}
tool~\cite{fugue} allows DFA-based specifications, but also annotations for
describing specific \emph{resource protocols} contracts. These annotations have
a \emph{locality} flavor---annotations on one method do not refer to other
methods. Moreover, we share the idea of specifying typestate without explicitly
mentioning states. These additional annotations in \textsc{Fugue}  are more
expressive than DFA-based typestates (e.g. ``must call a release method''). 
% but
% cannot be implemented as bit-vectors.
\revision{We conjecture that ``must call''  property can be encoded as 
bit-vectors in a complementary way to our \lfa approach.   
We leave this extension for future work. }

Our annotations could be mimicked by having a local DFA attached to each method.
In this case, the DFAs would have the same restrictions as our annotation
language. We are not aware of prior work in this direction. We also note that
while our technique is implemented in \textsc{Infer} using the  algorithm in
\S\ref{sec:technical}, the fact that we can translate typestates to bit-vectors
allows typestate analysis for local contracts to be used in distributive
dataflow frameworks, such as IFDS~\cite{IFDS}, without the need for modifying
the framework for non-distributive domains~\cite{extIFDS}.

\vspace{-3mm}
\section{Concluding Remarks}
\label{sec:conclusion}
In this paper, we have tackled the problem of analyzing code contracts in
 low-latency environments by developing a novel lightweight typestate analysis.
 Our technique is based on \lfas, a sub-class of contracts that can be encoded
 as bit-vectors. We believe \lfas are a simple and effective abstraction, with
 substantial potential to be ported to other settings in which DFAs are normally
 used.

 \paragraph{Acknowledgements}
  We are grateful to the anonymous reviewers for their constructive remarks. This 
work has been partially supported by the Dutch Research Council (NWO) under
 project No. 016.Vidi.189.046 (Unifying Correctness for Communicating Software).

\bibliographystyle{splncs04}
\bibliography{lfa}

\newpage
\setcounter{tocdepth}{3}
% \tableofcontents

\longversion{\appendix % \section{Appendix}
% \end{theorem}

\newpage
\section{Proofs}

\thmbfa* 
\label{app:tbfa}

\begin{proof} We only consider the first item, as the second item is shown similarly. 
   By $\wtd p \cdot m_{n+1} \notin L$ and 
   $\wtd p \cdot m_n \cdot m_{n+1} \in L$ and \defref{d:lfa} we know that 
   \begin{align}
    m_{n+1} \in E_{n}
    \label{p:bfa-1}
   \end{align}
   Further, for any $\wtd m \in \Sigma^*_c$ let $q_{b}$ be such that  
    $\delta(q_{10^{n-1}}, \wtd m)=q_b$   
   and $q_{b'}$ s.t. $\delta(q_b, m_n)=q_{b'}$.  Now, by the definition of 
   \defref{d:lfa} we have that $\delta(q_{b'},m_{n+1})$ 
   is defined as  by \eqref{p:bfa-1} we know 
   $P_{n+1} = \{ m_{n+1} \} \subseteq b'$. Thus, for all $\wtd m \in L$
   we have
   $\wtd m \cdot m_n \cdot m_{n+1} \in L$. This concludes the proof. 

 \end{proof} 

\thmstates* 
\label{app:states}

\begin{proof}  
	We show two items: 
	\begin{enumerate} 
		\item By \defref{d:lfa}, for all $q_b \in P$ we know  $\delta(q_b, m)$ is defined  
		when $P \subseteq b$ with $\langle E, P, D \rangle = \mathcal{L}_c(m)$. 
		So, we have $P \subseteq \bigcap_{q_b \in P} b = b_*$ and 
		$\delta(q_{b_*}, m)$ is defined. 
		\item  
	By induction on $\len{P}$. 
	\begin{itemize}
		\item $\len{P} = 1$. Follows immediately as $\bigcap_{q_b \in \{q_b\}} q_b = q_b$. 
		\item $\len{P} > 1$. Let $P = P_0 \cup \{q_b\}$. 
		Let $\len{P_0}=n$. By IH we know  
		\begin{align} 
			\label{eq:pr-inter-ih1}
		 \bigcap_{q_b \in P_0} \map{\sigma}(b) = \map{\sigma}(\bigcap_{q_b \in P_0}b)
		\end{align}
		\noindent We should show 
		$$\bigcap_{q_b \in (P_0 \cup \{q_{b'}\})} \map{\sigma}(b) = 
		\map{\sigma}(\bigcap_{q_b \in (P_0 \cup \{q_{b'}\})}b)$$
		We have 
		\begin{align*} 
			\bigcap_{q_b \in (P_0 \cup \{q_{b'}\})} \map{\sigma}(b) &=  
			\bigcap_{q_b \in P_0} \map{\sigma}(b) \cap \map{\sigma}(b') & \\ 
			&= \map{\sigma}(b_{*}) \cap \map{\sigma}(b') & (\text{by \eqref{eq:pr-inter-ih1}})\\ 
			&= ((b_* \cup E) \setminus D) \cap ((b' \cup E) \setminus D) &  \\ 
			& = ((b_* \cap b') \cup E) \setminus D & (\text{by set laws}) \\
			&= \map{\sigma}(b_*\ \cap \ b') = \map{\sigma}(\bigcap_{q_b \in (P_0 \cup \{q_{b'}\})}b) & 
		\end{align*} 
		\noindent where $b_*= \map{\sigma}(\bigcap_{q_b \in P_0}b)$.  
		This concludes the proof. 

	\end{itemize}
\end{enumerate} 
\end{proof} 

\thmjoin* 
\label{app:join}

\begin{proof} 
	By set laws we have: 
	\begin{align*}
		\map{\phi_1}(b) \cap \map{\phi_2}(b) &= ((b \cup E_1) \setminus D_1) \cap ((b \cup E_2) \setminus D_2)  \\
		&= ((b \cup E_1) \cap (b \cup E_2)) \setminus (D_1 \cup D_2) \\ 
		&= (b \cup (E_1 \cap E_2)) \setminus (D_1 \cup D_2) \\
		&= (b \cup (E_1 \cap E_2 \setminus (D_1 \cup D_2)) \setminus (D_1 \cup D_2) 
		= \map{\phi_1 \sqcup \phi_2}(b)
	\end{align*}
	This concludes the proof. 
\end{proof}

\thmdtranfer*
\label{app:transfer}

\begin{proof} 
	\begin{itemize} 
		\item ($\Rightarrow$) Soundness: 
		By induction on the length of method sequence $\widetilde{m} = m_1,\ldots,m_n$. 
		\begin{itemize} 
			\item Case $n = 1$. In this case we have $\widetilde{m} = m_1$. 
				Let $\langle E^m, D^m, \{m_1\} \rangle = \mathcal{L}_c(m_1)$. 
				By \defref{d:transfer} we have 
				$E' = (\emptyset \cup E^m) \setminus D^m = E^m$ and 
				$D' = (\emptyset \cup D^m) \setminus E^m = D^m$  as 
				$E^m$ and $D^m$ are disjoint, and 
				$P' = \emptyset \cup (\{m \} \setminus \emptyset)$.  
				So, we have $b' = (b \cup E^m) \setminus D^m$. 
				Further, we have
				 $P' \subseteq b$. 
				Finally, by the definition of $\delta(\cdot)$ from 
				\defref{d:lfa} we have 
				$\hat{\delta}(q_b, m_1,\ldots,m_n)=q_{b'}$. 

				\item Case $n > 1$. Let $\widetilde{m}=m_1,\ldots,m_n, m_{n+1}$. 
		By IH we know  
		\begin{align}
		\label{eq:p1-ih} 
			&\dtransfer_{c}(m_1,\ldots,m_n, \langle \emptyset, 
			\emptyset, \emptyset \rangle) =
				\langle E', D', P' \rangle
				\Rightarrow \hat{\delta}(q_b, m_1,\ldots,m_n)=q_b'
			% b' = (b \cup E') \setminus D' \ \text{and} \ P' \subseteq b
		\end{align}
		\noindent where 
		% $b' = \map{\langle E', D', P' \rangle}(b)$. 
		% % that is 
		$b' = (b \cup E') \setminus D'$ and  $P' \subseteq b$. 
		Now, we assume $P'' \subseteq b$ and 
		\begin{align*}
			&\dtransfer_{c}(m_1,\ldots,m_n, m_{n+1}, 
			\langle \emptyset, \emptyset, \emptyset \rangle) =
			\langle E'', D'', P'' \rangle
		\end{align*}
		We should show  
		\begin{align} 
			\label{eq:p1-goal} 
			\hat{\delta}(q_b, m_1,\ldots,m_n, m_{n+1})=q_b'' 
		\end{align} 
		\noindent where
		$b'' = (b \cup E'') \setminus D''$. 
%		such that $b'' = (b \cup E'') \setminus D''$ and $P'' \subseteq b$. 
		Let $\mathcal{L}_c(m_{n+1})=\langle E^m, D^m, P^m \rangle$.  
		We know $P^m=\{m_{n+1}\}$. 
		By \defref{d:transfer} we have 
		\begin{align*}
			\dtransfer_{c}(m_1,\ldots,m_n, m_{n+1}, 
			\langle \emptyset, \emptyset, \emptyset \rangle) = 
			\dtransfer_{c}(m_{n+1}, \langle E', D', P' \rangle)
		\end{align*}

		% First we consider case when $P^m \cap D \not= \emptyset$. 
		Further, we have 
		% $E'' = (E' \cup E^m) \setminus D^m$, 
		% $D'' = (D' \cup D^m) \setminus E^m$, and 
		% $P'' = P' \cup (P^m \setminus E')$. 
		\begin{align} 
			\label{eq:p1-epd-ih}
			E'' = (E' \cup E^m) \setminus D^m \qquad  
			D'' = (D' \cup D^m) \setminus E^m \qquad   
			P'' = P' \cup (P^m \setminus E') 
		\end{align} 

		Now, by substitution and De Morgan's laws we have:
		\begin{align*}
			b'' &= (b \cup E'') \setminus D'' = \\ 
			&=
			(b \cup ((E' \cup E^m) \setminus \textcolor{black}{D^m})) 
			\setminus ((D' \cup \textcolor{black}{D^m}) \setminus E^m)\\
			&= 
			((b \cup (E' \cup E^m)) 
			\setminus (D' \setminus E^m)) \setminus \textcolor{black}{D^m}  \\
			&= 
			(((b \cup E') \setminus D') \cup \textcolor{black}{E^m}) \setminus D^m \\
			&= (b' \cup \textcolor{black}{E^m}) \setminus D^m
		\end{align*}

		Further, by $P'' \subseteq b$,  
		$P'' = P' \cup (P^m \setminus E')$, and $P^m \cap D'=\emptyset$, we have 
		$P^m \subseteq (b \cup E') \setminus D' = b'$ (by \eqref{eq:p1-ih}).
		% and unfolding $b'=\map{\langle E', D', P' \rangle}(b)$
		% ). 
		So, we can see that by definition of \defref{d:lfa} we have 
		$\delta(q_{b'}, m_{n+1})=q_{b''}$. 
		This concludes this case. 
	\end{itemize} 

		\item ($\Leftarrow$) Completeness: 
		\begin{itemize} 
			\item $n = 1$. In this case $\widetilde{m}=m_1$. 
			Let $\langle E^m, D^m, \{m_1\} \rangle = \mathcal{L}_c(m_1)$. 
			By \defref{d:lfa} we have $b'=(b \cup E^m) \setminus D^m$ 
			and $\{m_1\} \subseteq b$. By \defref{d:transfer} we have 
			$E' = E^m$, $D'=D^m$, and $P' = \{m_1\} $. 
			Thus, as $\{m_1\} \cap \emptyset = \emptyset$ we have 
			$b' = \map{\langle E', D', P' \rangle}(b)$. 
			
		\item $n > 1$. Let $\widetilde{m}=m_1,\ldots,m_n, m_{n+1}$. 
		By IH we know  
		\begin{align}
		\label{eq:p1-ih2} 
			&
			\hat{\delta}(q_b, m_1,\ldots,m_n)=q_b'
			\Rightarrow  
			\dtransfer_{c}(m_1,\ldots,m_n, \langle \emptyset, 
			\emptyset, \emptyset \rangle) =
				\langle E', D', P' \rangle
			% b' = (b \cup E') \setminus D' \ \text{and} \ P' \subseteq b
		\end{align}
		\noindent where $b' = (b \cup E') \setminus D'$ and  $P' \subseteq b$. 
		Now, we assume 
		\begin{align}
			\label{eq:p1-hyp2}
			\hat{\delta}(q_b, m_1,\ldots,m_n, m_{n+1})=q_{b''}
		\end{align}
		We should show that 
		\begin{align*}
			&\dtransfer_{c}(m_1,\ldots,m_n, m_{n+1}, 
			\langle \emptyset, \emptyset, \emptyset \rangle) =
			\langle E'', D'', P'' \rangle
		\end{align*}
		\noindent such that $b'' = (b \cup E'') \setminus D''$ and 
		$P'' \subseteq b$. 
		We know 
		\begin{align*}
			\dtransfer_{c}(m_1,\ldots,m_n, m_{n+1}, 
			\langle \emptyset, \emptyset, \emptyset \rangle) = 
			\dtransfer_{c}(m_{n+1}, \langle E', D', P' \rangle)
		\end{align*}

		By \defref{d:lfa} we have: 
		\begin{align*} 
			\hat{\delta}(q_b, m_1,\ldots,m_n, m_{n+1})= 
			\delta( \hat{\delta}(q_b, m_1,\ldots,m_n), m_{n+1})=q_{b''}
		\end{align*}

		So by \eqref{eq:p1-ih2} and 
		\eqref{eq:p1-hyp2} we have $\{m_{n+1} \} \subseteq b'$ 
		and $b'=(b \cup E') \setminus D'$. 
		It follows $\{m_{n+1} \} \cap D' = \emptyset$. 
		That is, $\dtransfer_{c}(m_{n+1}, \langle E', D', P' \rangle)$
		is defined. Finally, showing that 
		$b'' = (b \cup E'') \setminus D''$  is by the substitution and 
		De Morgan's laws as in the previous case. 
		This concludes the proof. 
	\end{itemize} 
 	\end{itemize} 
	 Now, we discuss specialization of \thmref{t:dtranfer} for the code checking. 
%	\thmref{t:dtranfer}  shows the correctness of summary
%	computation and it  
%	can be specialized for the correctness of code checking.  
%	For code checking,
	In this case, we know that a method sequence starts with the
	constructor method (i.e., the sequence is of the form $constr, m_1, \ldots,
	m_n$) and $q_{10^{n-1}}$ is the input state. By  
	$\mathit{well\_formed}(\mathcal{L}_c)$
	(\defref{d:wellformed-lc}) we know that if $\delta(q_{10^{n-1}},
	constr)=q_b$ and 
	$$\dtransfer_c(constr, m_1, \ldots, m_n, \langle \emptyset,
	\emptyset, \emptyset \rangle) = \sigma$$ then methods not
	enabled in $q_b$ are in the disabling set of 
	$\sigma$. Thus, for any sequence $m_1,\ldots, m_{k-1}, m_{k}$ such that 
	$m_k$ is disabled by the constructor  and 
	not enabled in substring $m_1,\ldots,
	m_{k-1}$, the condition $P \cap D_i \not= \emptyset$ correctly  
	 checks that a method is disabled. If  $\mathit{well\_formed}(\mathcal{L}_c)$
	 did not hold, the algorithm would fail to detect an error as it would put
	 $m_k$ in $P$ since $m_k \notin E$. 
\end{proof}

% \begin{lstlisting}
% 	class SparseLU {
% 		void analyzePattern(Mat a); 
% 		void factorize(Mat a); 
% 		void compute(Mat a); 
% 		void solve(Mat b); 
% 		void transpose(); }
% 	\end{lstlisting}

\newpage
\section{Sample Contract used in Evaluations (\secref{sec:evaluation})}
\label{app:cr4}
	\begin{lstlisting}[caption={SparseLU LFA CR4 contract}, label={lst:refltaext}, captionpos=b]
class SparseLU {
	SparseLU(); 
	$@EnableOnly(factorize)$
	void analyzePattern(Mat a); 
	$@EnableOnly(solve, transpose)$
	void factorize(Mat a); 
	$@EnableOnly(solve, transpose)$
	void compute(Mat a); 
	$@EnableAll$
	void solve(Mat b); 
	$@Disable(transpose)$
	void transpose(); }
\end{lstlisting}

\begin{lstlisting}[caption={SparseLU DFA CR4 contract}, label={lst:refdfatranspose}, captionpos=b]
class SparseLU {
	$states q0, q1, q2, q3, q4;$ 
	$@Pre(q0) @Post(q1)$
	$@Pre(q3) @Post(q1)$ 
	void analyzePattern(Mat a); 
	$@Pre(q1) @Post(q2)$
	$@Pre(q3) @Post(q2)$  
	void factorize(Mat a); 
	$@Pre(q0) @Post(q2)$ 
	$@Pre(q3) @Post(q2)$ 
	void compute(Mat a); 
	$@Pre(q2) @Post(q3)$
	$@Pre(q3)$ 
	void solve(Mat b);  
	$@Pre(q2) @Post(q4)$
	$@Pre(q4) @Post(q3)$
	void transpose();}
\end{lstlisting}

\begin{figure}[h] 
    \begin{mdframed}
        \center
    \begin{tikzpicture}[shorten >=1pt,node distance=3.5cm,on grid,auto] 
       \node[state,initial] (q_0)   {$q_0$}; 
       \node[state](q_1) [right=of q_0] {$q_1$};
       \node[state](q_2) [below=of q_1] {$q_2$};
       \node[state](q_3) [right=of q_1] {$q_3$};
       \node[state](q_4) [right=of q_2] {$q_4$}; 
       \path[->]
       (q_0) edge [above] node {$aP$} (q_1)
       (q_1) edge  [left] node {$factorize$} (q_2)
    %    (q_1) edge [loop above] node {$aP$} (q_1)
    %    (q_1) edge [bend left]  node {$factorize$} (q_2)
       (q_0) edge [bend right, below, sloped] node {$compute$} (q_2)
    %    (q_2) edge [loop below] node{$aP,factorize,compute, solve$} (q_2) 
       (q_2) edge [bend left=15] node {$solve$} (q_3)
       (q_3) edge [bend left=15, below, sloped] node {$comp., fact.$} (q_2)
       (q_3) edge [right, above] node {$aP$} (q_1)
       (q_3) edge [loop right] node{$solve$} (q_3) 
       (q_3) edge [loop right] node{$solve$} (q_3) 
       (q_2) edge [below] node{$transpose$} (q_4) 
       (q_4) edge [bend left=15, above] node {$solve$} (q_3)
       (q_3) edge [bend left=15, below] node {$transpose$} (q_4)
       ;
\end{tikzpicture}
\end{mdframed}
\caption{DFA diagram of SparseLU CR-4 contract}
% \label{fig:sparselu-dfa-refined}
\end{figure} 

\begin{lstlisting}[caption={SparseLU TOPL CR4 contract}, label={lst:reftopltranspose}, captionpos=b]
property SparseLU
  prefix "SparseLU"
  start -> start: *
  start -> q0: SparseLU() => x := RetFoo
  q1 -> q2: analyzePattern(SparseLU, IgnoreRet) when SparseLU == x
  q3 -> q2: analyzePattern(SparseLU, IgnoreRet) when SparseLU == x
  q1 -> q2: factorize(SparseLU, IgnoreRet) when SparseLU == x
  q3 -> q2: factorize(SparseLU, IgnoreRet) when SparseLU == x
  q0 -> q2: compute(SparseLU, IgnoreRet) when SparseLU == x
  q3 -> q2: compute(SparseLU, IgnoreRet) when SparseLU == x
  q2 -> q3: solve(SparseLU, IgnoreRet) when SparseLU == x
  q2 -> q4: transpose(SparseLU, IgnoreRet) when SparseLU == x
  q4 -> q2: transpose(SparseLU, IgnoreRet) when SparseLU == x
  q2 -> error: analyzePattern(SparseLU, IgnoreRet) when SparseLU == x
  q3 -> error: analyzePattern(SparseLU, IgnoreRet) when SparseLU == x
  q4 -> error: analyzePattern(SparseLU, IgnoreRet) when SparseLU == x
  q0 -> error: factorize(SparseLU, IgnoreRet) when SparseLU == x
  q2 -> error: factorize(SparseLU, IgnoreRet) when SparseLU == x
  q4 -> error: factorize(SparseLU, IgnoreRet) when SparseLU == x
  q1 -> error: compute(SparseLU, IgnoreRet) when SparseLU == x
  q2 -> error: compute(SparseLU, IgnoreRet) when SparseLU == x
  q4 -> error: compute(SparseLU, IgnoreRet) when SparseLU == x
  q1 -> error: solve(SparseLU, IgnoreRet) when SparseLU == x
  q4 -> error: solve(SparseLU, IgnoreRet) when SparseLU == x
  q4 -> error: solve(SparseLU, IgnoreRet) when SparseLU == x
  q0 -> error: transpose(SparseLU, IgnoreRet) when SparseLU == x
  q1 -> error: transpose(SparseLU, IgnoreRet) when SparseLU == x
  q4 -> error: transpose(SparseLU, IgnoreRet) when SparseLU == x
\end{lstlisting}}{}

\end{document}
\endinput